\documentclass[12pt]{article}
\usepackage{graphicx}
\usepackage[round]{natbib}
\usepackage[colorlinks=true, linkcolor=blue, citecolor=blue,
urlcolor=black, pdfpagemode={UseNone}]{hyperref}
\usepackage{authblk} 
\usepackage{url}
\usepackage[font=small]{caption} 
\usepackage{amsmath,amssymb,bm,amsthm}
\usepackage{mathtools}
\usepackage{bbm}
\usepackage{color}
\usepackage{anyfontsize} 
\usepackage{setspace} 
\usepackage{silence} 
\graphicspath{{figures/}}

\newtheorem{proposition}{Proposition}

\theoremstyle{definition}

\newcommand{\di}{d} 
\newcommand{\p}{p} 
\newcommand{\penalt}{q} 

\DeclareMathOperator*{\argmin}{\mbox{argmin}}

\DeclareMathOperator{\tr}{\mbox{trace}}

\newcommand{\sspan}{\mbox{span}}
\newcommand{\ang}{\mbox{angle}}

\newcommand{\case}{\text{\scriptsize{case}}}
\newcommand{\cell}{\mbox{\scriptsize{cell}}}
\newcommand{\obs}{\mbox{\scriptsize{obs}}}
\newcommand{\new}{\mbox{\scriptsize{new}}}
\newcommand{\imp}{\mbox{\scriptsize{imp}}}

\newcommand{\eps}{\varepsilon}

\newcommand{\bzero}{\boldsymbol 0}
\newcommand{\bone}{\boldsymbol 1}
\newcommand{\bdot}{\boldsymbol{\cdot}}
\newcommand{\btop}{\boldsymbol{\top}}
\newcommand{\ba}{\boldsymbol a}
\newcommand{\bb}{\boldsymbol b}

\newcommand{\be}{\boldsymbol e}

\newcommand{\bv}{\boldsymbol v}
\newcommand{\bhx}{\boldsymbol{\widehat{x}}}

\newcommand{\bimpx}{\boldsymbol{x}^{\mbox{\scriptsize{imp}}}}
\newcommand{\bx}{\boldsymbol x}

\newcommand{\by}{\boldsymbol y}

\newcommand{\bA}{\boldsymbol A}
\newcommand{\bB}{\boldsymbol B}

\newcommand{\bD}{\boldsymbol D}
\newcommand{\bH}{\boldsymbol H}
\newcommand{\bI}{\boldsymbol I}

\newcommand{\bP}{\boldsymbol P}
\newcommand{\bhP}{\boldsymbol{\widehat{P}}}
\newcommand{\bR}{\boldsymbol R}

\newcommand{\bU}{\boldsymbol U}

\newcommand{\bV}{\boldsymbol V}

\newcommand{\bW}{\boldsymbol W}
\newcommand{\bX}{\boldsymbol X}
\newcommand{\bhX}{\boldsymbol{\widehat{X}}}

\newcommand{\bZ}{\boldsymbol Z}

\newcommand{\bbeta}{\boldsymbol \beta}
\newcommand{\bgamma}{\boldsymbol \gamma}
\newcommand{\bhgamma}{\boldsymbol{\widehat{\gamma}}}

\newcommand{\bTheta}{\boldsymbol \Theta}
\newcommand{\bhTheta}{\boldsymbol{\widehat{\Theta}}}

\newcommand{\bhbeta}{\boldsymbol{\widehat{\beta}}}

\newcommand{\bmu}{\boldsymbol \mu}
\newcommand{\hmu}{\widehat{\mu}}
\newcommand{\bhmu}{\boldsymbol{\widehat{\mu}}}
\newcommand{\bSigma}{\boldsymbol \Sigma}
\newcommand{\bhSigma}{\boldsymbol{\widehat{\Sigma}}}

\newcommand{\halpha}{\widehat{\alpha}}
\newcommand{\hbeta}{\widehat{\beta}}

\newcommand{\hsigma}{\widehat{\sigma}}

\newcommand{\hy}{\widehat{y}}

\newcommand{\sbr}[1]{\left[#1\right]}
\newcommand{\idx}{{p_1 \ldots p_L}}

\definecolor{blue}{RGB}{0,0,255}
\definecolor{red}{RGB}{255,0,0}

\begin{document}

\def\spacingset#1{\renewcommand{\baselinestretch}
{#1}\small\normalsize} \spacingset{1}

\title{\bf Cellwise Outliers}

\author[1]{Mia Hubert}
\author[2]{Jakob Raymaekers} 
\author[1]{Peter J. Rousseeuw}

\affil[1]{\small Section of Statistics and Data Science, 
  University of Leuven, Belgium, \hspace{10mm}
  mia.hubert@kuleuven.be, peter.rousseeuw@kuleuven.be}
\affil[2]{\small Department of Mathematics, University 
  of Antwerp, Belgium,  
  jakob.raymaekers@uantwerpen.be}

\date{August 2, 2026}

\maketitle

\bigskip
\begin{abstract}
In statistics and machine learning, the 
traditional meaning of the terms 
`outlier' and `anomaly' is a case in the
dataset that behaves differently from 
the bulk of the data, which raises suspicion 
that it may belong to a different 
population. But nowadays increasing 
attention is being paid to so-called
cellwise outliers, which are individual 
values somewhere in the data matrix
(or data tensor).
Depending on the dimension, even a 
relatively small proportion of outlying 
cells can pollute over half the cases, 
which is a problem for existing methods
that work by downweighting or dropping
outlying cases. It turns out that detecting
cellwise outliers as well as constructing
cellwise robust methods requires 
techniques that are quite different from 
the casewise setting. For instance, one 
has to let go of some intuitive 
equivariance properties. The problem is 
difficult, but the past decade has seen 
substantial progress. For high-dimensional 
data the cellwise approach is becoming 
dominant, and typically can deal with 
missing values as well. We review 
developments in the estimation of
location and covariance matrices as well
as regression methods, principal component
analysis, methods for tensor data, and 
various other settings.
\end{abstract}

\noindent {\it Keywords:}
Anomaly detection; Breakdown value; 
Multivariate statistics; Robustness;\linebreak
\mbox{Tensor} data.

\spacingset{1.12} 

\newpage
\section{INTRODUCTION} 
\label{sec:intro}
Real-world data often contains elements
that do not follow the pattern 
suggested by the majority of the data. 
Such outliers may be gross errors with
the potential to severely distort a
statistical analysis, but they may also
be valuable pieces of information that 
warrant further inspection. But whatever
their cause, we need to be able to detect 
them.

The approach of robust statistics, 
as formalized by e.g. \cite{Huber1964} and 
\cite{hampel1986robust} has been to first
obtain a robust fit, i.e. a model that fits 
the majority of the data well, and then to
look for cases that deviate from that fit. 
An important assumption underlying this 
approach is that of casewise contamination,
in which a case is either an outlier or 
free of any contamination. 
(This is sometimes called rowwise 
contamination, because cases are often 
encoded as rows of the data matrix,
but the name casewise is more general.)
Different formalizations of casewise
contamination exist, but the most common 
one is to assume that the observed data 
was generated from 
a clean distribution $F$ with probability 
$1-\varepsilon > 0.5$ and from an arbitrary 
distribution $H$ with probability 
$\varepsilon < 0.5$. This is known as 
the {\it Tukey-Huber contamination model}.
A commonly used formulation is 
$X_\eps \sim F_{\varepsilon} \coloneqq 
 (1-\varepsilon)F + \varepsilon H$. 
The goal is to estimate the characteristics 
of $F$ given that we only observe 
$F_{\varepsilon}$ and, importantly, 
we don't assume anything about $H$.
The casewise contamination model can 
equivalently be written as
\begin{equation} \label{eq:cont_case}
   X_{\eps}=A^{\case} \odot X + 
   (\bone_{\di}-A^{\case}) \odot Z 
\end{equation}
where the Hadamard product $\odot$ multiplies 
vectors (and matrices) entry by entry, and 
$\bone_{\di}$ is a column vector with all 
$\di$ components equal to $1$. We assume that
$X$, $A^{\case}$ and $Z$ are mutually 
independent, with $X \sim F$ and $Z \sim H$. 
The $\di$-variate variable $A^{\case}$ has 
Bernoulli distributed marginals $A_j^{\case}$ 
for $j=1,\ldots,\di$ with success parameter 
$1-\eps^{\case}$, that are \textbf{fully 
dependent} in the sense that 
$P(A_1^{\case} = \ldots = A_{\di}^{\case})=1$.
\citet{alqallaf2009} call this the 
\textit{fully dependent contamination model}. 
It implies that on average, 
$(1-\eps^{\case})100\%$ of the cases are clean. 
The left panel of Figure~\ref{fig:casecellmixed} 
visualizes this setting for a toy data set with 
15 cases and 10 variables. Here 3 out of the 15 
cases (20\%) are casewise outliers.

\begin{figure}[!ht]
\centering
\vspace{2mm}
\includegraphics[width=1.0\textwidth]
   {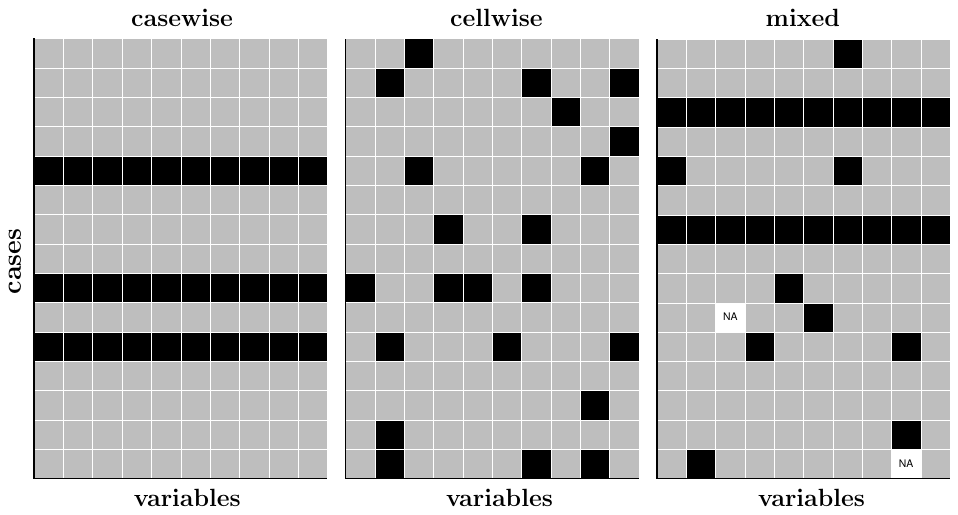}\\
\caption{Illustration of the casewise, cellwise, 
 and mixed contamination models. Black means
 outlying, and white cells are missing.}
\label{fig:casecellmixed}
\end{figure}

Casewise robust methods require that at least 
half of the cases are entirely uncontaminated. 
This assumption is often realistic in 
low-dimensional datasets, but less so 
in high dimensions because then there are many 
variables in which something can go wrong. 
Moreover, casewise robust methods work by 
downweighting or deleting all entries of the
outlying cases, whereas their outlying behavior 
might occur in only a few of them. This 
has motivated the study of {\it cellwise 
outliers}, which are deviating measurements 
(cells) that can occur anywhere in the data 
matrix.

The \textit{cellwise contamination model} 
assumes that the data are generated according to
\begin{equation} \label{eq:cont_cell}
  X_{\eps}=A^{\cell} \odot X + 
  (\bone_{\di}-A^{\cell}) \odot Z 
\end{equation}
with $X$ and $Z$ as in~\eqref{eq:cont_case}. 
The $\di$-variate variable $A^{\cell}$ has  
Bernoulli components $A_j^{\cell}$ with 
success probability $1-\eps^{\cell}_j$. 
If the components of $A^{\cell}$ are 
\textbf{independent}, we obtain the 
\textit{fully independent contamination 
model} of \citet{alqallaf2009}. 
Then on average each 
variable has $(1-\eps^{\cell}_j)100\%$ clean 
values, but even a relatively small proportion 
of outlying cells can contaminate over half 
the cases, which may cause casewise robust 
methods to fail. 
The middle panel of 
Figure~\ref{fig:casecellmixed} illustrates 
how a dataset with $22/150 \approx 15\%$ of 
outlying cells yields only 
$4/15 \approx 27\%$ entirely uncontaminated 
cases. It also illustrates that a case can 
have zero, one, a few, or many outlying 
cells. In general, the probability that a 
case contains at least one outlying cell is 
$1-(1-\eps^{\cell})^{\di}$ which grows quickly 
with the dimension $\di$. For instance, when
$\eps^{\cell}=0.05$ and $\di=14$ 
this probability is 51\%. It increases to 
97\% when $\di=70$, thus impacting the vast 
majority of cases. 

In real data, both types of outliers can 
occur simultaneously, and some values 
might be missing. The 
\textit{mixed contaminated and partially 
observed model} of \cite{cellPCA} says 
that
\begin{equation} \label{eq:cont_both}
  X_{\eps} = A \odot X + 
  (\bone_{\di}-A) \odot Z 
\end{equation}
where $A = A^{\case} \odot A^{\cell} \odot 
A^{\obs}$, the variable $A^{\case}$ is 
defined as in~\eqref{eq:cont_case}, and 
$A^{\cell}$ as in~\eqref{eq:cont_cell}. 
The entries $A_j^{\obs}$ in $A^{\obs}$ are 
binary variables with possible outcomes 1 
and NA, and 
$P(A_j^{\obs} = 1) = 1-\eps^{\obs}_j$. 
The right panel of 
Figure~\ref{fig:casecellmixed} shows a toy 
example. 

Different assumptions on the dependence 
structure of $X$, $Z$,  $A^{\case}$, 
$A^{\cell}$ and $A^{\obs}$ lead to different 
contamination models. For instance, when 
$A^{\obs}$ is independent of the other 
variables, the values in the dataset are 
missing completely at random. 

Long before the notion of cellwise 
outliers appeared in print in
\cite{alqallaf2009}, it was 
discussed informally by some. 
One of us remembers clearly that the 
topic was brought up in the early 
eighties in seminar talks by Alfio 
Marazzi and Werner Stahel at ETH 
Z\"urich. It was presented as an open
problem, with the formula
$1-(1-\eps^{\cell})^{\di}$. At that 
time the available tools appeared 
insufficient to address this 
challenge.

The remainder of the paper is organized 
as follows. Section~\ref{sec:detect}  
gives an overview of methods to detect
outlying cells. Section~\ref{sec:cov}
surveys the work on cellwise robust
estimation of multivariate location and
covariance matrices, and 
Section~\ref{sec:regr} reviews the work
on regression. Section~\ref{sec:highdim}
looks at methods specifically designed for 
high-dimensional data, and tensors are
treated in Section~\ref{sec:tensor}.
Section~\ref{sec:other} summarizes work
in other statistical settings, and
Section~\ref{sec:conc} concludes.

\section{DETECTING CELLWISE OUTLIERS}
\label{sec:detect}

The detection of cellwise outliers is 
not easy. A natural approach is to 
look at each individual variable, and
identify marginal outliers through, 
for example, robust z-scores. A more
refined version of this principle is 
the filter introduced by
\cite{gervini2002class}. When variables
are asymmetric, we can first apply the
robust transformation towards symmetry 
constructed by
\cite{raymaekers2021transforming}.

Detecting cellwise outliers by only
looking at marginal distributions has
some evident drawbacks. One can easily
imagine cellwise outliers that need not
be marginally outlying, especially 
in multivariate distributions with 
substantial correlations between the
variables. 
Consider for instance the TopGear dataset 
from the R package \texttt{robustHD}
\citep{robustHD}. It contains technical 
information of 297 cars such as their height, 
weight, acceleration, horsepower, and gas
mileage in miles per gallon (MPG).
The Ssangyong Rodius was reported with an 
acceleration time of 0 seconds, indicating 
that it achieves 60 mph instantly. This was 
obviously an error, and it can easily be 
identified from the marginal distribution 
as seen in the bottom left panel of 
Figure~\ref{fig:cellmap_Topgear}. On the 
other hand, the Fiat 500 Abarth appears to 
have a small width, but not marginally. The 
small width only becomes apparent when 
considering the car's other features, such 
as Top speed and Acceleration, as seen in 
the panels on the left. 

\begin{figure}[!ht]
\centering
\includegraphics[width=1.0\textwidth]
  {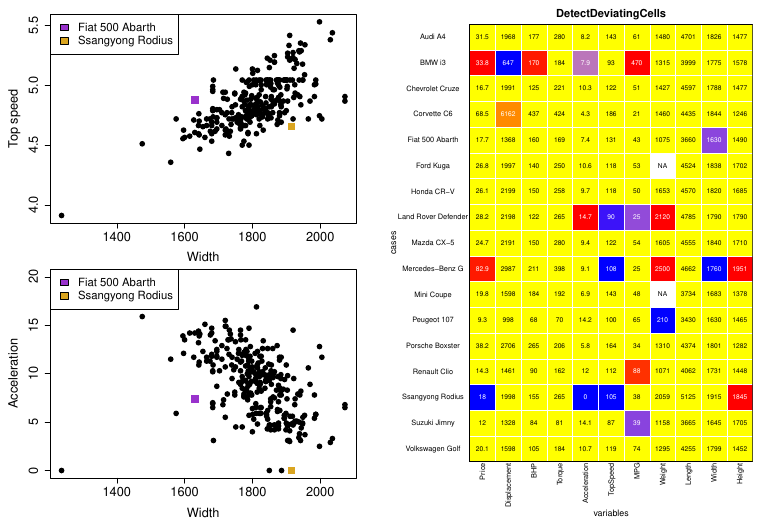}
\caption{Left: some variables of the Top
  Gear data, illustrating that cellwise 
  outliers need not be marginally outlying. 
  Right: DDC cellmap of these data.}
\label{fig:cellmap_Topgear}
\end{figure}

As a result, several proposals
have sought to incorporate multivariate
information into the detection of cellwise
outliers. For example, \cite{Leung2017}
extended the Gervini-Yohai filter to
a bivariate filter in the context of 
estimating location and scatter.
\cite{saraceno2021robust2} further extended
the approach to more than two dimensions. 

A popular approach for this purpose
is the Detect Deviating Cells (DDC) 
algorithm of \cite{Rousseeuw2018}
that is available in the \textsf{R}
package \texttt{cellWise}
\citep{cellWise}. For each variable, 
DDC fits robust simple linear regressions 
on the other variables in the multivariate
data set. By combining the predictions
from these regressions, DDC obtains
predictions for each variable.
The differences between these predictions
and the observed values then yield
cellwise residuals, based on which 
cellwise outliers can be identified.

The result can be visualized in a 
cellmap as shown in the right panel of 
Figure~\ref{fig:cellmap_Topgear}, 
displaying a part of the Top Gear 
data. The blue cells have a negative 
residual, that is, they have a lower 
value than expected based on the 
other variables. The red cells have a 
positive residual. We see that the DDC 
method does flag the width of the 
Fiat 500 Abarth, even though it is not 
marginally outlying. The DDC method has 
been applied in epidemiology
\citep{Viviani2023}, genetics
\citep{segaert2019robust},
geochemistry \citep{GRYGAR2024107416},
archaeology \citep{SANTOS2020102423, 
leggett2021migration},
and other fields. It also serves as a 
starting value for many iterative 
algorithms for cellwise robust estimation 
of more complex models.

These early proposals of detecting
cellwise outliers can and have been
used as preprocessing before 
applying traditional statistical 
methods. Unfortunately, this makes
the process of identifying cellwise 
outliers oblivious to the statistical
model that is being fit afterward.
This deviates 
from the established approach in 
traditional robust statistics of 
detecting outliers with respect 
to  a robust fit.
As we will see, recent advances 
in cellwise robust statistics are
moving in this direction, 
where the detection of outliers
is naturally paired with the model 
fitting. This typically
leads to more accurate detection
of harmful cellwise outliers, as
the influence of these outliers 
depends on the model being fit.

\section{MULTIVARIATE LOCATION AND\\ 
COVARIANCE MATRICES} \label{sec:cov}

Suppose we are given a data matrix $\bX$ with 
$n$ observations in $d$ dimensions. In this 
section we consider the estimation of a location 
vector and a covariance matrix when $d < n$. 
In casewise robust statistics, the traditional 
goal is to develop estimators for location and 
covariance that are affine equivariant. More 
precisely, consider any affine transformation 
$\bx \mapsto \bv + \bA\bx$ where $\bx$ and 
$\bv$ are in $\mathbb{R}^{d \times 1}$ and 
$\bA\in \mathbb{R}^{d\times d}$ is nonsingular. 
The location estimator $\bhmu$ and covariance 
estimator $\bhSigma$ are called affine 
equivariant when
\begin{equation} \label{eq:affloccov}
\begin{aligned}
 \bhmu(\bone_n\bv^{\btop}+\bX\bA^{\btop} )
  =&\; \bv + \bA\,\bhmu(\bX) \\
 \bhSigma(\bone_n\bv^{\btop}+\bX\bA^{\btop} )
  = &\; \bA\, \bhSigma(\bX) \bA^{\btop} , 
\end{aligned}
\end{equation}
where $\bone_n$ is the $n \times 1$
vector of ones. 
These equivariances say that the estimators 
behave naturally under affine transformations. 
Under this equivariance condition it turned
out one can construct highly casewise robust 
estimators of location and covariance, in 
the sense that almost 50\% of the cases can
be replaced before the estimator becomes 
completely uninformative.

In this section we consider the more general
contamination model of~\eqref{eq:cont_both}.
We assume that the clean data $\bX$ are 
sampled independently from the model $F$, 
say a multivariate normal distribution with mean 
$\bmu$ and covariance $\bSigma$. Afterward
the data can be contaminated by casewise and
cellwise outliers. Despite affine equivariance
being natural, we will see that it cannot be 
reconciled with robustness under cellwise 
contamination. This indicates that robustness 
to cellwise outliers requires a paradigm shift 
away from the casewise framework.

\subsection{Breakdown of location and 
            covariance estimators}
            
A well-known measure of the {\it casewise} 
robustness of a location estimator $\bhmu$ 
is its finite-sample breakdown value 
\citep{donoho1983}. 
Informally, the casewise breakdown value of 
$\bhmu$ at the dataset $\bX$ is the smallest
fraction of cases that needs to be replaced 
to carry the resulting estimate arbitrarily 
far from $\bhmu(\bX)$, the estimate on 
the original data. More precisely, denote by 
$\bX^m$ any corrupted data matrix obtained by 
replacing at most $m$ cases in $\bX$ by 
arbitrary points. The casewise breakdown 
value of $\bhmu$ at $\bX$ is defined as 
\begin{equation} \label{eq:bdvloc}
  \varepsilon^*_n(\bhmu, \bX)=
  \min \left\{\frac{m}{n}:\;
	\sup_{\bX^m}{\left|\left|\bhmu(\bX^m) - 
	\bhmu(\bX)\right|\right|} = 
	\infty\right\}.
\end{equation}

For the \textit{cellwise} breakdown value we
instead replace cells. In this setting $\bX^m$
becomes any contaminated data matrix obtained by 
replacing at most $m$ cells in each column of 
$\bX$ by arbitrary numbers. The 
{\it finite-sample cellwise breakdown value} of 
$\bhmu$ at $\bX$ is then defined as 
in~\eqref{eq:bdvloc}, but with this new $\bX^m$.
The old $\bX^m$ is a special case of the new
version, obtained by concentrating all the 
contaminated cells in a fraction of the rows,
so the casewise breakdown value of an
estimator is an upper bound on its cellwise 
breakdown value. 
\cite{lopuhaa1991breakdown} showed that the
casewise breakdown value of any translation 
equivariant location estimator is at most
$\lfloor (n+1)/2 \rfloor/n$, so this is also
an upper bound on the cellwise breakdown 
value. Note that translation equivariance is
the special case of the affine equivariance
in~\eqref{eq:affloccov} for $\bA = \bI$, 
the $d \times d$ identity matrix.

However, we cannot have affine equivariance
for arbitrary nonsingular $\bA$, and not even
orthogonal equivariance, in which $\bA$ is
any orthogonal matrix. That is because an 
orthogonal transformation of the data 
can change all the cells. The formal
result says

\begin{proposition} \label{prop:loc}
The cellwise breakdown value of any 
orthogonally equivariant location estimator 
$\bhmu$ at any dataset $\bX$ is bounded by
\begin{equation} \label{eq:locbdv}
  \varepsilon^*_n(\bhmu, \bX) \leqslant
  \left\lceil \frac{n}{d} 
  \right\rceil/n\, \approx \frac{1}{d}\;.
\end{equation}
\end{proposition}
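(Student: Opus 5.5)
The plan is to build two corrupted datasets, each with at most $m=\lceil n/d\rceil$ cells replaced per column, that are orthogonal images of one another up to a translation chosen so that equivariance forces the two estimates far apart. Splitting the rows of $\bX$ into $d$ blocks $B_1,\ldots,B_d$ of size at most $\lceil n/d\rceil$ lets every row in block $B_j$ be forced to lie on a chosen hyperplane by modifying only its $j$th cell. Then each column has at most $\lceil n/d\rceil$ replaced cells.

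Concretely, I would fix a unit vector $\bv$ with all components nonzero, for instance $\bv=\bone_d/\sqrt d$. For each row $\bx_i$ with $i\in B_j$, I replace its $j$th cell so that $\bv^{\btop}\bx_i = t$ for a prescribed constant $t$. This is possible because $v_j\neq 0$. The corrupted dataset $\bX^m_t$ then lies entirely on the hyperplane $H_t=\{\bx:\bv^{\btop}\bx=t\}$.

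Next I would use the reflection $\bA = \bI - 2\bv\bv^{\btop}$, which is orthogonal. It maps $H_t$ to $H_{-t}$ and acts as the identity on directions orthogonal to $\bv$. Let $\bR_s$ be the reflection across $H_s$, namely $\bx\mapsto \bA\bx + 2s\bv$. This is an orthogonal map followed by a translation, so orthogonal equivariance (which I take to include translation equivariance, as in~\eqref{eq:affloccov} with $\bA$ orthogonal) gives $\bhmu(\bR_s(\bX^m_t)) = \bR_s(\bhmu(\bX^m_t))$. Taking $s=t$ fixes every point of $H_t$, so $\bR_t$ leaves the dataset $\bX^m_t$ unchanged, row by row. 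Hence $\bhmu(\bX^m_t)$ is a fixed point of $\bR_t$, which means $\bhmu(\bX^m_t)\in H_t$. Therefore
\[
\|\bhmu(\bX^m_t)-\bhmu(\bX)\|\geqslant |t-\bv^{\btop}\bhmu(\bX)|,
\]
and letting $t\to\infty$ gives breakdown with $m=\lceil n/d\rceil$ replaced cells per column. This yields~\eqref{eq:locbdv}. The approximation $\lceil n/d\rceil/n\approx 1/d$ is immediate.

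The main point needing care is this symmetry step: I must confirm that the estimator is well defined on the degenerate dataset lying in a hyperplane, and that equivariance applies to the reflection combined with a translation. If the paper's notion of orthogonal equivariance excludes translations, I would instead center the argument at $t$ by other means. The most direct option is to use the pure orthogonal reflection $\bA$ through the origin with $t=0$, together with translation along $\bv$, which is itself a form of translation equivariance. A secondary detail is the case where $n$ is not divisible by $d$, which the ceiling handles, and the case $d>n$, where some blocks are empty and the bound is trivially fine.
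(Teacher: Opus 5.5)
Your proof is correct and follows the paper's route. The paper first shows that an orthogonally equivariant $\bhmu$ must lie in any lower-dimensional affine subspace containing all the data, by the same reflection/fixed-point argument you use (compare the matrix $\bD$ in its PCA proof), and then moves every row onto a far-away hyperplane not parallel to any coordinate axis by changing one cell per row, at most $\lceil n/d\rceil$ per column. You do need the translation part, since $\bhmu\equiv\bzero$ would otherwise be a counterexample, but the paper's definition (equation~\eqref{eq:affloccov} with $\bA$ orthogonal) includes it, so your $t=0$ fallback is unnecessary.
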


More information on this result can be found 
in the Appendix. Intuitively, all 
the data can be moved to a far-away hyperplane 
by changing a single cell in each row of $\bX$,
and then $\bhmu(\bX^m)$ has to lie on that
hyperplane.

We conclude that orthogonal equivariance
cannot be reconciled with cellwise
robustness. For the estimation of location,
there is a natural solution in using 
a univariate robust location estimate for each 
variable separately, and combining them into a 
point. Such an estimator is not orthogonally 
equivariant, and inherits the breakdown value 
of the univariate location estimator.

Unfortunately, for estimating a covariance 
matrix there is no coordinatewise trick. Note 
that a covariance matrix estimator can break 
down in two ways, namely through having an 
arbitrarily large or an arbitrarily small 
eigenvalue. More precisely, the 
{\it cellwise explosion 
breakdown value} of the covariance 
estimator $\bhSigma$ is defined as
\begin{equation} \label{eq:explosion}
  \varepsilon^+_n(\bhSigma, \bX)=
  \min \left\{\frac{m}{n}:\;
  \sup_{\bX^m}\lambda_1(\bhSigma
  (\bX^m)) = +\infty\right\}
\end{equation}
where $\lambda_1$ denotes the largest 
eigenvalue. The {\it cellwise 
implosion breakdown value} of 
$\bhSigma$ is defined as
\begin{equation} \label{eq:implosion}
  \varepsilon^-_n(\bhSigma, \bX) =
  \min \left\{\frac{m}{n}:\;
  \inf_{\bX^m}\lambda_d(\bhSigma
  (\bX^m)) = 0 \right\}
\end{equation}
where $\lambda_d$ is the smallest 
eigenvalue. 
Implosion implies that the estimated
covariance matrix is singular.
A singular covariance matrix is often not 
very useful, as it is not invertible and
therefore cannot be used to perform
standard statistical analyses such as
the computation of Mahalanobis distances 
or discriminant analysis.

In the casewise setting, it is 
known that affine equivariant
covariance estimators can have a
breakdown value of at most
$\lfloor (n-d+1)/2\rfloor /n 
\approx 0.5$\,. 
It turns out that obtaining a high
cellwise breakdown value for the 
covariance matrix is impossible under 
affine equivariance.

\begin{proposition} \label{prop:impl}
The cellwise implosion breakdown value
of any affine equivariant covariance 
estimator $\bhSigma$ is bounded for any 
dataset $\bX$ by
\begin{equation} \label{eq:impl}
  \varepsilon^-_n(\bhSigma, \bX) 
  \leqslant \left\lceil \frac{n-1}{d} 
  \right\rceil/n\, \approx \frac{1}{d}\;.
\end{equation}
\end{proposition}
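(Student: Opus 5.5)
The plan is to build a contaminated data matrix $\bX^m$ whose $n$ rows all lie on one affine hyperplane, using at most $m=\lceil (n-1)/d\rceil$ replaced cells in each column. Affine equivariance should then force $\bhSigma(\bX^m)$ to be singular. This gives $\inf_{\bX^m}\lambda_d(\bhSigma(\bX^m))=0$ at this $m$, hence the bound~\eqref{eq:impl}.

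\emph{Construction.} Pick a vector $\ba\in\mathbb{R}^d$ with all components nonzero, for instance $\ba=\bone_d$. Keep the first row $\bx_1$ of $\bX$ untouched and set $b=\ba^{\btop}\bx_1$. Split the remaining $n-1$ rows into $d$ groups $G_1,\ldots,G_d$, each of size at most $\lceil (n-1)/d\rceil$. For a row $i\in G_j$, change only its $j$-th cell, replacing $x_{ij}$ by the unique value that gives $\ba^{\btop}\bx_i=b$; this value exists because $a_j\neq 0$. Column $j$ is then altered only in the rows of $G_j$, so at most $\lceil (n-1)/d\rceil$ cells per column are changed. Every row of the new matrix lies on the hyperplane $H=\{\bx:\ba^{\btop}\bx=b\}$, and $\bx_1$ lies on it by the choice of $b$.

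\emph{Equivariance forces singularity.} Set $\bx_0=\bx_1\in H$. For $t>0$ consider the affine map $\bx\mapsto \bx_0+\bA_t(\bx-\bx_0)$ with
\[
\bA_t=\bI+(t-1)\,\ba\ba^{\btop}/\|\ba\|^2 ,
\]
which is nonsingular. This map fixes every point of $H$, so it leaves $\bX^m$ unchanged. It has the form $\bx\mapsto\bv+\bA_t\bx$ with $\bv=(\bI-\bA_t)\bx_0$, so \eqref{eq:affloccov} applies and gives $\bhSigma(\bX^m)=\bA_t\bhSigma(\bX^m)\bA_t^{\btop}$. Since $\bA_t\ba=t\ba$ and $\bA_t$ is symmetric, multiplying on the left by $\ba^{\btop}$ and on the right by $\ba$ yields
\[
\ba^{\btop}\bhSigma(\bX^m)\ba = t^2\,\ba^{\btop}\bhSigma(\bX^m)\ba .
\]
Taking any $t\neq 1$ gives $\ba^{\btop}\bhSigma(\bX^m)\ba=0$. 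As $\bhSigma$ is positive semidefinite, this means $\lambda_d(\bhSigma(\bX^m))=0$, so implosion occurs with $m=\lceil (n-1)/d\rceil$.

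\emph{Main obstacle.} The delicate step is the cell budget: we must hit a common hyperplane while spending only one cell per modified row and spreading the modified rows evenly across the columns. Two choices make this work. Taking $\ba$ with no zero components lets each row be repaired through any single coordinate. Choosing $H$ through an unmodified row saves that row, which is why the bound involves $n-1$ rather than $n$. The remaining points are routine: implosion is attained exactly rather than only in the limit, so the infimum is $0$, and the convention for $\lceil\cdot\rceil$ and degenerate cases such as $n\le d$ is harmless.
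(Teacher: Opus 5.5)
Your proof is correct and takes the same route as the paper: affine equivariance forces $\bhSigma$ to be singular when all rows lie on a hyperplane, and one cell per row is moved onto a hyperplane through an untouched observation, which is why the bound is $\lceil (n-1)/d\rceil$ rather than the $\lceil n/d\rceil$ of the location case. Your stretching map $\bA_t$, which fixes $H$, gives a self-contained proof of the singularity step; the paper states that step and defers its proof to an earlier reference.
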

Since most common covariance estimators
are affine equivariant, they have a low
cellwise breakdown value of approximately
$1/d$. An interesting example is the 
classical empirical covariance matrix, 
that has a high {\it casewise} implosion 
breakdown value of $(n-d)/n \approx 1$, 
but this low {\it cellwise} breakdown value.

While the above results seem discouraging, 
we will see in the next section
that it is still possible to construct 
estimators of covariance with a high 
cellwise breakdown value.

\subsection{Cellwise robust estimators of 
location and covariance} 
\label{sec:cov_methods}

While location can be estimated 
coordinatewise, this strategy is no 
longer viable for the estimation of 
covariance matrices. \cite{VanAelst2011}
proposed the first cellwise robust 
covariance estimator. Their cellwise
Stahel-Donoho estimator is inspired
by the classical Stahel-Donoho estimator
for casewise robust covariance estimation 
\citep{stahel1981robuste,
donoho1982breakdown}. The estimator was 
later surpassed in performance by the 
two-step generalized S-estimator 
(TSGS) of \citep{Agostinelli2015, 
Leung2017}. TSGS first filters the data 
for cellwise outliers, and puts 
the flagged cells to missing. In a 
second step, it applies the generalized 
S-estimator (GSE) of \cite{danilov2012}
for casewise robust covariance 
estimation with missing data. This 
combination of filtering and estimation 
was the topic of an extensive comparative
study by \cite{danilov2010robust}.
The conclusion was that some alternative
approaches were computationally 
burdensome. TSGS therefore incorporates
a bivariate filter to keep 
the computation time manageable. 
TSGS and GSE are implemented in 
the \textsf{R}-package \texttt{GSE}  
\citep{Leung2019}.

TSGS was the state-of-the-art method for 
estimating covariance matrices under cellwise 
contamination for quite some time. More 
recently, \cite{cellMCD} proposed the cellMCD
method that combines the identification of 
suspicious cells and the estimation of the 
covariance matrix, based on earlier research 
in \cite{cellHandler}. 

CellMCD is a cellwise robust extension of the 
casewise robust Minimum Covariance Determinant 
estimator of \cite{rousseeuw1984least}, that
maximizes the likelihood of a subset of 
{\it cases}. It is a combinatorial approach 
with the aim of identifying a subset of cases 
that tightly fit together. In the cellwise
paradigm, we instead consider a subset of 
trusted {\it cells}. The identity of the cells
used is stored in an $n \times d$ binary 
matrix $\bW$. Here $\bW_{ij} = 1$ means we
use the cell $x_{ij}$ (it is considered clean), 
and $\bW_{ij} = 0$ means we do not use cell
$x_{ij}$ (it is considered contaminated).

A natural measure of how well the used cells 
fit together is the observed likelihood
$L(\bX, \bW, \bmu, \bSigma)$. If we knew $\bW$
ahead of time, we would be in the usual
missing data setting, with $\bW_{ij} = 0$
indicating missing values. Then we could simply
maximize $L(\bX, \bW, \bmu, \bSigma)$ by
the EM algorithm, yielding $(\bhmu,\bhSigma)$.

However, here $\bW$ is not known in advance,
so it needs to be estimated as well. We cannot
just select $(\bW,\bhmu,\bhSigma)$ by 
maximizing $L(\bX, \bW, \bmu, \bSigma)$ over
all possibilities, because the resulting
$\bW$ could take out too many cells. That is 
why cellMCD instead minimizes
\begin{equation} \label{eq:cellMCD}
 -2\log L(\bX, \bW, \bmu, \bSigma) + 
 \sum_{j=1}^d \penalt_j (\# 
 \mbox{ flagged cells in variable }j)
\end{equation}
with respect to $\bmu$, $\bSigma$, and $\bW$.
The second term penalizes flagging many 
cells (setting their $\bW_{ij}$ to zero), 
which helps the efficiency. The constants 
$\penalt_j$ are fixed before running the 
algorithm, and determine the quantile at 
which cells are considered outlying,
typically $99\%$.
The minimization is carried out under two 
constraints. The first is a lower bound on
the smallest eigenvalue of $\bSigma$, needed
for $L(\bX, \bW, \bmu, \bSigma)$ to exist.
The second says the number of unflagged cells
in each column must be at least $h$, a
number given by the user that must be at 
least $n/2$ and is $0.75n$ by default.

The objective~\eqref{eq:cellMCD}
is minimized as follows. Starting from a 
robust initial guess for $(\bmu,\bSigma)$, 
the algorithm iterates between updating 
$\bW$ (i.e., flagging suspicious cells),
and re-estimating $(\bmu, \bSigma)$. 
Each step is guaranteed to lower the 
objective function, so we know the 
iterations will converge. CellMCD has a 
high cellwise  breakdown value that is at 
most $(n - \lfloor n/2\rfloor)/n$, 
depending on the choice of $h$.

CellMCD implicitly uses a hard-thresholding
rule. If a cell is flagged as outlying,
all information in that cell is discarded.
In some settings, it might be more 
appropriate to give smoother weights
to the cells of a data matrix, and obtain 
estimates based on these weights. To this
end \cite{cwMLE} proposed a method to 
statistically analyze data with cellwise 
weights that can take any value between
$0$ and $1$. It provides a canonical 
definition of cellwise weighted likelihood, 
that can be used to estimate location and 
covariance for given weights.

\section{REGRESSION} 
\label{sec:regr}

We denote the linear regression model as
\begin{equation} \label{eq:reg}
 \by = \alpha + \bX \bbeta + \mbox{noise} 
\end{equation}
where $\by = (y_1,\ldots,y_n)$ is the 
$n \times 1$ column vector of responses,
and the $n \times \p$ matrix $\bX$ of 
regressors has the rows $\bx_i^{\btop}$. 
We want to estimate the $(\p+1) \times 1$ 
parameter vector $\bgamma = (\alpha,\beta_1,
\ldots, \beta_{\p})^{\btop}$ that combines
the intercept and the slopes. 

In casewise robust linear regression, 
an observed pair $(\bx_i, y_i)$ is 
considered to be either outlying or 
entirely clean. 
For cellwise robust regression the setup 
is different.
It remains true that the response $y_i$
is either outlying or clean because it 
is univariate, so it has only one cell.
The main difference lies in the 
contamination of the regressor variables. 
Instead of working with either outlying
or entirely clean 
$\bx_i^{\btop} = (x_{i1},\ldots,x_{i\p})$, 
we now allow for some of the cells 
$x_{ij}$ to be contaminated while the 
others are clean. 
We would like to use the clean cells of
$\bx_i$ in the estimation of $\alpha$ 
and $\bbeta$, but at the same time limit 
the harmful influence of the bad cells
of $\bx_i$\,.

\subsection{Breakdown in regression}
\label{reg_bdv}

The cellwise breakdown value of a
regression estimator is defined as 
in~\eqref{eq:bdvloc}, by replacing the
vector $\bhmu$ by the vector 
$\bhgamma = (\halpha,\hbeta_1,
\ldots, \hbeta_{\p})^{\btop}$ and
replacing $\bX$ by $(\bX,\by)$.

Existing regression methods are
typically {\it regression equivariant},
that is, for all $(\p+1)\times 1$ 
vectors	$(\widetilde{\alpha},
\widetilde{\beta}_1, \ldots, 
\widetilde{\beta}_{\p})^{\btop}$
it holds that
\begin{equation}\label{eq:regeq}
 \bhgamma(\bX,\by +
 \widetilde{\alpha} +
 \bX(\widetilde{\beta}_1, \ldots, 
\widetilde{\beta}_{\p})^{\!\btop}) =
 \bhgamma(\bX,\by)+ 
 (\widetilde{\alpha},
\widetilde{\beta}_1, \ldots, 
\widetilde{\beta}_{\p})^{\!\btop}
\end{equation}
for all datasets $(\bX,\by)$. This
is similar to translation equivariance 
for location estimators.
Another common property is {\it scale
equivariance}, meaning that for any
constant $c$ we have
\begin{equation}\label{eq:scaleq}
  \bhgamma(\bX,c\,\by)=
			  c\,\bhgamma(\bX,\by)\;.
\end{equation}
However, these properties preclude
cellwise robustness.

\begin{proposition} \label{prop:reg}
The cellwise breakdown value of any 
regression and scale equivariant 
estimator $\bhgamma$
is at most
\begin{equation} \label{eq:bdreg}
 \left\lceil \frac{n-1}{\p+1} 
 \right\rceil/n\, \approx 
 \frac{1}{\p+1}\;.
\end{equation}
\end{proposition}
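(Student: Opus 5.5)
The plan is to show that with $m = \lceil (n-1)/(\p+1)\rceil$ replaced cells per column we can make \emph{every} case lie exactly on a hyperplane of our choosing. We then push the slope of that hyperplane to infinity and use the two equivariances to force $\bhgamma$ to follow it. The dataset $(\bX,\by)$ has $\p+1$ columns: the $\p$ regressors and the response. Replacing at most $m$ cells in each column lets us alter one cell in each of $m(\p+1)\geqslant n-1$ different rows. So we keep one case, say case $1$, untouched, and change exactly one cell in each of the cases $2,\ldots,n$. We assign these $n-1$ cases to the $\p+1$ columns so that no column receives more than $m$ of them.

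First, I would record a consequence of scale equivariance~\eqref{eq:scaleq}. For any regressor matrix $\bX'$ and the zero response, $\bhgamma(\bX',\bzero)=\bhgamma(\bX',c\,\bzero)=c\,\bhgamma(\bX',\bzero)$ for all $c$. Taking $c=2$ gives $\bhgamma(\bX',\bzero)=\bzero$. Combined with regression equivariance~\eqref{eq:regeq}, any dataset whose response satisfies exactly $\by'=\alpha\bone_n+\bX'\bbeta$ must then have
\[
\bhgamma(\bX',\by')=\bhgamma(\bX',\bzero)+(\alpha,\bbeta^{\btop})^{\btop}=(\alpha,\bbeta^{\btop})^{\btop}.
\]
Here it matters that both equivariances hold for \emph{all} datasets, including the modified regressor matrix $\bX'$, and not only for the original $\bX$.

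Next, for $t>0$ I would take the hyperplane with slopes $\bbeta_t=t\,\bone_{\p}$ and intercept $\alpha_t=y_1-t\sum_j x_{1j}$, so that the untouched case $1$ lies on it. For each case $i\geqslant 2$, if it was assigned to the response column, we replace $y_i$ by $\alpha_t+\bx_i^{\btop}\bbeta_t$. If it was assigned to regressor column $j$, we replace $x_{ij}$ by the unique value that solves $y_i=\alpha_t+t\sum_k x_{ik}$. This value exists because the coefficient $\beta_{t,j}=t$ is nonzero, which is exactly why all slopes are chosen nonzero. The resulting $(\bX^m,\by^m)$ has all $n$ cases on the hyperplane, so the previous step gives $\bhgamma(\bX^m,\by^m)=(\alpha_t,t\,\bone_{\p}^{\btop})^{\btop}$. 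Its norm tends to infinity as $t\to\infty$, which proves breakdown at $m/n=\lceil (n-1)/(\p+1)\rceil/n$.

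The main points to get right are the bookkeeping and the solvability. The bookkeeping means ensuring that at most $m$ cells per column are used, which the ceiling guarantees. The solvability means that the modified regressor cell can always be solved for, which the choice of nonzero slopes guarantees. A further subtlety is that the case which is left clean must still lie on the target hyperplane. This is handled by anchoring the intercept at case $1$ rather than by spending one more replaced cell on it, and it is the reason the bound has $n-1$ rather than $n$ in the numerator.
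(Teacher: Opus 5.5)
Your proof is correct and is essentially the paper's argument. The paper first obtains the exact-fit property \eqref{eq:regEFP} from regression plus scale equivariance, as in your first step, and then changes one cell in each of $n-1$ rows, spread over the $\p+1$ columns, so that all cases lie on an ever steeper hyperplane through the untouched case.

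The one detail the paper keeps and you drop is the hypothesis in \eqref{eq:regEFP} that the regressors do not lie in a lower-dimensional subspace. Equivariance~\eqref{eq:regeq} cannot literally hold at a rank-deficient design, since $[\bone_n,\bX]\widetilde{\bgamma}=\bzero$ with $\widetilde{\bgamma}\neq\bzero$ would give $\bhgamma=\bhgamma+\widetilde{\bgamma}$. So strictly you should also check that your $\bX^m$ keeps full rank. This is immediate when $m\geqslant\p$ and $\bX$ is in general position: send at least $\p$ of the changed cells to the response column. Otherwise it needs a careful choice of slope direction and row-to-column assignment, rather than fixing $\bbeta_t=t\,\bone_{\p}$ and assigning rows arbitrarily.
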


So if we want a regression estimator with 
a better cellwise breakdown value, we have 
to let go of the combination of equivariance 
properties~\eqref{eq:regeq}
and~\eqref{eq:scaleq}.

\subsection{Cellwise robust regression
methods} \label{sec:reg_methods}

One of the earliest proposals for 
cellwise robust regression was by
\cite{leung2016robust}. Their model 
assumes joint multivariate normality
of the pairs $(\bx_i, y_i)$ before
the cells are contaminated.
They start by estimating the location 
$\bhmu$ and covariance matrix $\bhSigma$ 
of the joint distribution of the
pairs $(\bx_i, y_i)$ by the TSGS 
estimator of \cite{Agostinelli2015},
with a modified filter in the first step. 
Then the slope coefficients and the
intercept are estimated from $\bhmu$
and $\bhSigma$ by the usual formulas
\begin{equation} \label{eq:beta}
 \bhbeta := \bhSigma_{xx}^{-1}
 \bhSigma_{xy}\;\;\;
 \mbox{ followed by }\;\;\;
 \halpha = \hmu_y - 
 \bhmu_x^{{\btop}}\bhbeta \;.
\end{equation}
The robustness of this estimator stems
from the TSGS method, but this approach 
leans heavily on the assumption of joint 
normality of the clean data.

An alternative proposal is the 
shooting S-estimator 
of \cite{ollerer2016shooting}. 
It uses the coordinate descent 
algorithm \citep{bezdek1987}, also 
called ``shooting algorithm''. 
A typical step is to fit only the 
$j$-th slope coefficient $\beta_j$ 
while keeping the other slopes fixed 
at their previous values:
\begin{equation} \label{eq:shooting}
  y_i^{(j)} \sim \;
  \halpha_j^{\mbox{\tiny{new}}} + 
  \hbeta_j^{\mbox{\tiny{new}}} x_{ij} 
  \;\;\;\;\;\mbox{ where }\;\;\;\;\;
  y_i^{(j)} \coloneqq y_i - 
  \bx_{i,-j}^{\btop} 
  \bhbeta_{-j}^{\mbox{\tiny{old}}} 
  \;\;.
\end{equation}
This can be seen as a simple regression 
of a new response $\by^{(j)}$ on the 
$j$-th predictor. 
The shooting S-estimator makes two 
changes to the coordinate descent
algorithm. The first is that in the 
simple regression 
of~\eqref{eq:shooting} the OLS 
regression is replaced by a casewise
robust S-estimator. 
The second is that the response 
$\by^{(j)}$ is modified in each 
iteration, to avoid that the cellwise 
outliers propagate to the new 
responses in the iteration steps. 
\cite{bottmer2022sparse} recently 
proposed a version of shooting S
regression for high dimensions that
uses hard thresholding, yielding 
sparse estimates of the regression 
coefficients.

The variable selection method of 
\cite{su2024robust} first computes a 
robust covariance matrix of the joint
distribution of the pairs $(\bx_i, y_i)$.
For this they use a method based on
pairwise correlations, using those of  
\cite{gnanadesikan1972robust} or the  
Gaussian rank correlation studied by 
\cite{boudt2012gaussian}.
They then plug the square root of this 
covariance matrix into the adaptive 
lasso objective.
\cite{toka2021robust} propose a 
three-step procedure consisting of (i) 
identifying marginal cellwise outliers 
using robust z-scores and setting them
to missing, (ii) running the casewise 
robust MCD estimates of location and
covariance on the complete cases followed
by imputing the NA's as in the E-step 
of the EM algorithm, and (iii) applying 
a robust lasso regression to the 
imputed data. 
\cite{saraceno2021robust} carry out
robust seemingly unrelated regressions 
using several univariate MM-regressions, 
and apply the TSGS method of 
\cite{Agostinelli2015} to the residual
matrix. Another proposal is the algorithm 
by \cite{CRlasso}. It optimizes an 
objective function inspired by 
mean-shift penalized objectives for 
casewise robust regression 
\citep{she2011outlier}. Recently
\cite{Christidis2026} constructed a 
cellwise robust ensemble regression 
method.

The cellwise least trimmed squares
(cellLTS) method proposed by 
\cite{cellLTS} works in a different way.
Its first step is to apply the cellMCD
method of Section~\ref{sec:cov_methods}
to the regressor matrix $\bX$. 
Next, cellwise 
outliers as well as NA's are imputed, 
yielding a cleaned matrix $\bhX$ of
regressors. Finally, the casewise 
least trimmed squares (LTS) method is
applied to regress $\by$ on $\bhX$.
To avoid collinearity issues, the LTS
objective now includes a penalty term
$\lambda||\bbeta||_2^2$ with a small
$\lambda$, so it is a ridge version of 
LTS, with a tailor made algorithm. 
Moreover, cellLTS has an option to
symmetrize the data, for the situation
that some variables are rather skewed.
This is applied to $[\bX\;;\;\by]$, and 
then the location $\bmu$ of $\bX$ and 
the intercept $\alpha$ of the regression 
must be estimated afterward.

\begin{figure}[!ht]
\centering
\includegraphics[width=1.0\columnwidth]
  {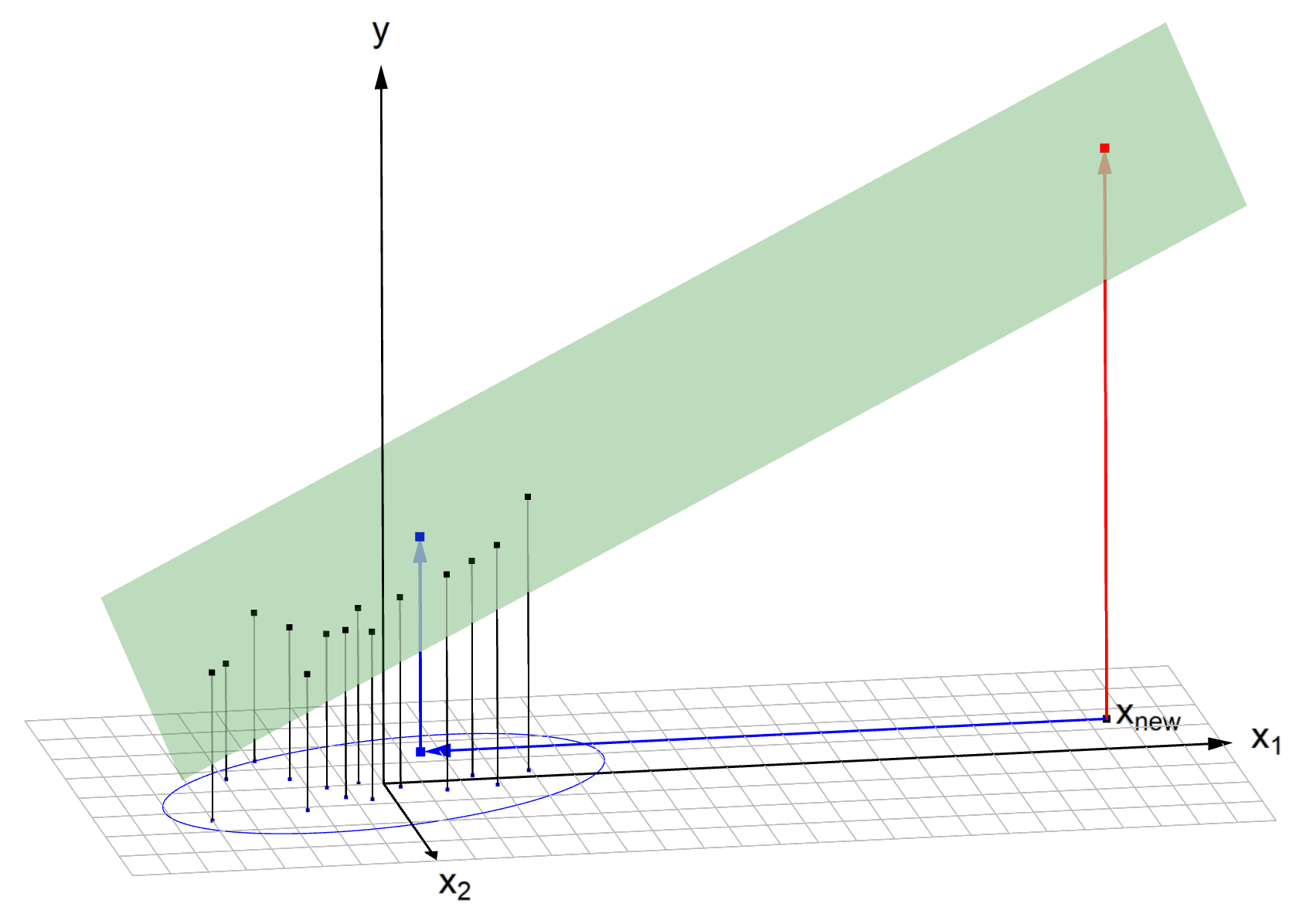}
\caption{Illustration of out-of-sample
   prediction with the cellLTS method.}
\label{fig:cellLTS}
\end{figure}

Another novelty of cellLTS is that it
is geared to making robust 
out-of-sample predictions. Indeed, 
when a new regressor vector $\bx_{\new}$
arrives, it may have outlying cells. 
Therefore we should not just use the 
standard formula $\hy = \halpha + 
\bx_{\new}^{\btop}\bhbeta$.
This is illustrated in 
Figure~\ref{fig:cellLTS}, where $\by$ is
regressed on two variables. The black
points are the in-sample (training)
data. The majority of the bivariate 
regressor vectors $(x_{i1},x_{i2})$ lie 
inside the blue ellipse.
At the bottom right is the new point
$\bx_{\new}$, whose first cell is a
distant cellwise outlier. Simply
multiplying that cell by $\hbeta_1$
gives it a very big effect: the 
predicted value would be obtained by 
following the red arrow upward until
it meets the regression plane.
Instead, the cellLTS prediction starts 
by imputing the first cell of
$\bx_{\new}$ along the blue
horizontal arrow. For the prediction
we then go upward on the blue
vertical arrow, yielding a more
natural prediction.

The in-sample predictions by cellLTS
use the exact same mechanism, that 
also works for missing cells. It was
verified by simulation that the cellLTS
predictions outperformed those of 
earlier methods, and it was illustrated 
on real data with outlying cells.

CellLTS has the advantage that its
cellwise breakdown value can be
derived. CellLTS
uses a value $h$ like the cellMCD
and casewise LTS methods.

\begin{proposition} \label{prop:cellLTS}
If $\bX$ is in general position and 
$h \geqslant [(n+p+1)/2]$, then the
cellwise breakdown value of cellLTS is
$\eps_n^*(\bhgamma, \bZ) = (n-h+1)/n$.
This is also its casewise breakdown
value.
\end{proposition}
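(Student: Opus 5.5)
The argument splits into an upper bound and a lower bound. Since the casewise breakdown value bounds the cellwise one from above, and both should equal $(n-h+1)/n$, it suffices to show two things. First, replacing $n-h+1$ cases (hence at most $n-h+1$ cells per column) can make $\bhgamma$ explode. Second, no cellwise contamination with at most $n-h$ cells per column can do so. This also yields the statement about the casewise value.

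\textbf{Upper bound.} Replace $n-h+1$ entire cases by points lying exactly on a hyperplane $y=\alpha_0+\bx^{\btop}\bbeta_0$ with $\|(\alpha_0,\bbeta_0)\|$ huge, and with regressors clustered near the bulk of $\bX$. These points form a tight cluster, so cellMCD can take them into its $h$-subsets without penalty. Imputation then leaves them essentially unchanged. The final ridge-LTS on $(\bhX,\by)$ then has at least $h$ cases with zero residual with respect to $(\alpha_0,\bbeta_0)$, whereas any $h$-subset not dominated by the bad cases must include at least one bad case, since only $h-1$ clean cases remain. Such a subset therefore has an objective growing with $\|(\alpha_0,\bbeta_0)\|$. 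The small ridge penalty $\lambda\|\bbeta\|_2^2$ must be compared with this growth. I would handle it by letting the bad $y$-values grow so that the residual sum for any fit avoiding them dominates the quadratic penalty, for example by placing the bad $\bx$ at moderate distance so that fitting them requires a large slope. I will take the known casewise LTS argument as the template.

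\textbf{Lower bound.} Let at most $n-h$ cells per column be replaced. The first step is to invoke the breakdown result for cellMCD stated in Section~\ref{sec:cov_methods}: for this $h$, its $\bhmu$ and $\bhSigma$ stay bounded, with eigenvalues bounded away from $0$ and $\infty$. The imputation is a conditional-mean type formula driven by $(\bhmu,\bhSigma)$ and the unflagged cells. From this I will deduce that every row of $\bhX$ that contains a contaminated cell is either imputed to a bounded value or keeps a wild cell that is extremely far out, and that clean rows are unchanged.

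The key step, and the one I expect to be hardest, is to transfer robustness to the LTS stage. Note that $y$ is univariate, so at most $n-h$ responses are corrupted, and then at least $h$ cases have a clean $y_i$. However, their $\bhx_i$ may have been imputed, not original. I would first try to show that for $h$-subsets consisting of cases with clean $y_i$ and bounded $\bhx_i$, the LTS objective at the true-ish $\bgamma$ is bounded. I then need that any $\bhgamma$ with huge norm gives a huge objective on every $h$-subset. This needs a general-position argument. Any $h$ cases contain at least $h-(n-h)\geq p+1$ cases, and I need these to have bounded $\bhx_i$ in general position so that a large $\bhgamma$ forces a large residual. The difficulty is that imputed rows are not the original rows, so general position of $\bX$ does not transfer directly. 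I would handle this by using the count $h\geq\lfloor(n+p+1)/2\rfloor$ to guarantee $p+1$ fully clean, unimputed rows in any $h$-subset. If that fails, I would fall back on the ridge penalty, which alone keeps $\bhbeta$ bounded given a bounded objective value. The intercept would then be controlled by the bounded clean residuals.
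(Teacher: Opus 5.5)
\textbf{Upper bound.} Your construction does not work as written. Only the $n-h+1$ replaced cases lie on $y=\alpha_0+\bx^{\btop}\bbeta_0$, and here $n-h+1<h$. So the ridge-LTS on $(\bhX,\by)$ does \emph{not} have $h$ zero residuals for $(\alpha_0,\bbeta_0)$. That fit must also include $2h-n-1$ clean cases, whose residuals are of the same order as the residual of a bad case under a bounded fit. The penalty $\lambda\|\bbeta_0\|_2^2$ is of that order too, so the comparison is not settled. You would also have to control how cellMCD treats a tight cluster of $n-h+1$ cases.

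All of this can be avoided by replacing only $n-h+1$ responses, setting $y_i=M$ for $i\in B$. Because cellMCD and the imputation act on $\bX$ alone, $\bhX$ does not depend on $M$. Every $h$-subset $H$ meets $B$. For fixed $H$ the ridge fit is linear in $\by$, so it equals $c_H+M\,d_H$ with $d_H\neq\bzero$: otherwise the intercept normal equation at $\bgamma=\bzero$ for the response $\bone_B$ would give $|B\cap H|=0$. Since there are finitely many $H$, $\|\bhgamma\|\to\infty$. This contamination changes $n-h+1$ cells of one column and also $n-h+1$ cases, so it bounds both breakdown values.

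\textbf{Lower bound.} Your main route fails. Under cellwise contamination the $n-h$ bad cells in each of the $p+1$ columns of $[\bX\;;\;\by]$ can lie in different rows, so there need not be any fully clean row (e.g.\ when $(p+1)(n-h)\geqslant n$). Moreover, cellMCD may flag and impute clean cells, and imputed rows can coincide (a fully flagged row becomes $\bhmu$). So general position of $\bX$ gives nothing at the LTS stage.

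The ridge fallback is the right idea, but it is incomplete. Your description of $\bhX$ allows rows that keep a wild cell, and then a bounded $\bhbeta$ times a huge $\bhx_i$ can offset a huge $\halpha$. You need to show that $\bhX$ is uniformly bounded:
\begin{itemize}
\item The cellMCD optimum is at most the objective of the feasible configuration that flags exactly the contaminated cells, with a fixed $(\bmu,\bSigma)$.
\item The eigenvalue constraint bounds the log-determinants from below, so all Mahalanobis terms and all diagonal entries of $\bhSigma$ are bounded.
\item Each column has at least $2h-n\geqslant 1$ clean unflagged cells, which bound $\bhmu$. Hence every unflagged cell and every conditional-mean imputation is bounded.
\end{itemize}
Note that Section~\ref{sec:cov_methods} only gives an \emph{upper} bound on the breakdown value of cellMCD, so it cannot be cited for this step.

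With $\bhX$ bounded, the argument closes. The fit $(\alpha,\bzero)$ on $h$ cases with uncorrupted $y_i$ (which always exist) has bounded objective regardless of $\bhX$, so $\lambda\|\bhbeta\|_2^2\leqslant C$. The optimal subset contains at least $2h-n\geqslant 1$ uncorrupted responses, and any one of them bounds $\halpha$. This route requires $\lambda>0$ not to shrink with the data. It does not use general position at all, and it uses $h\geqslant[(n+p+1)/2]$ only through $2h>n$.
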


\section{HIGH-DIMENSIONAL DATA}
\label{sec:highdim}

\subsection{Detecting outlying cells in
   high dimensions}\label{sec:FastDDC}

The methods discussed so far were primarily 
designed to deal with low-dimensional data. 
Most of them require $d<n$ and have a strongly 
deteriorating performance as the dimension 
increases. However, the problem of cellwise 
outliers becomes more pronounced and thus 
increasingly relevant as the dimension of the 
data increases. The high-dimensional setting 
is thus a natural one for cellwise robustness, 
and several proposals for detecting outliers, 
dimension reduction and covariance estimation 
have been made. While working componentwise is 
usually very fast, it has the obvious limitation 
that the outlyingness of a cell is determined 
only by considering its marginal distribution. 
This has motivated the development of many other 
techniques, that have been shown to perform 
better in challenging scenarios.

For detecting cellwise outliers, 
\cite{raymaekers2021fast} introduced FastDDC, 
a computationally efficient version of the DDC 
method of Section~\ref{sec:detect}. FastDDC 
relies on a fast and robust screening technique 
for detecting linearly dependent variables, 
that are then used in an ensemble of simple 
linear regressions to detect cellwise outliers. 
For this purpose the variables are first 
transformed with componentwise transformations 
that limit the influence of outliers. Next,  
these transformed variables are subjected 
to a fast nearest-neighbor algorithm in dual 
space.

In order to obtain a cellwise robust 
correlation matrix in very high dimensions
the so-called {\it wrapping transformation}
of~\cite{raymaekers2021fast} can be used.
They gave an illustration where a dataset
with 921,600 dimensions was analyzed by
wrapping followed by a truncated 
singular value decomposition.

\subsection{Principal component analysis} 
\label{sec:PCA}

Principal component analysis (PCA) is an 
essential tool of multivariate statistics. 
Several formulations of classical PCA
exist, e.g. based on the spectral 
decomposition of the covariance matrix
or the singular value decomposition of
the centered data matrix.
One way of looking at it, is that PCA
approximates the dataset $\bX$ 
with $d$ columns by a new dataset 
$\bhX$ that lies in an affine subspace of 
dimension $k < d$ such that the Frobenius 
norm $||\bX - \bhX||_F$ is minimized. 
Several proposals exist for casewise 
robust PCA, including spherical
PCA \citep{locantore1999robust}, PCA
based on least trimmed squares orthogonal
regression \citep{maronna2005principal}, 
the PCAgrid projection pursuit method 
\citep{croux2007}, and the hybrid 
ROBPCA method \citep{hubert2005robpca}. 
There are also approaches for casewise 
robust sparse PCA, see 
\cite{croux2013sparsePCA}, 
\cite{HubertROSPCA},
and \cite{wang2020sparse}.

For cellwise outliers, things are more 
difficult. One complication stems from 
the fact that classical PCA projects the 
data orthogonally on a lower dimensional
subspace. Projecting a case with 
cellwise outliers is problematic, as the 
outlying cells can propagate over all
cells. Therefore, whenever the algorithm
makes a projection, the outlying cells
should be addressed.

\subsection{Breakdown of PCA}
\label{sec:PCA_bdv}

Following \cite{cellPCA} we formulate a
PCA method as an estimator of the 
\mbox{$k$-dimensional} principal subspace 
$\bhP(\bX)$ of $\bX$. This definition is
quite general and does not require to specify
principal directions inside the subspace.

For the breakdown value, we need an appropriate 
definition of breakdown in the context of PCA. 
We define the finite-sample cellwise 
breakdown value of $\bhP$ at the dataset $\bX$ 
as the smallest fraction of contamination needed 
to get a maxangle (defined below) of $\pi/2$ 
between the resulting affine subspace and the 
original one:
\begin{equation} \label{eq:bdvPCA}
  \varepsilon^*_n(\bhP, \bX)=
  \min \left\{\frac{m}{n}:\;
  \sup_{\bX^m} \; \mbox{maxangle}\Big(
  \bhP(\bX^m), \bhP(\bX)\Big)
  = \frac{\pi}{2} \right\}.
\end{equation}
Here the maxangle between subspaces $\bA$ and
$\bB$ is defined as follows. Shift the affine 
subspaces so they pass through the origin, 
making them linear subspaces $\bA_0$ and
$\bB_0$. Then
\begin{equation*}
  \mbox{maxangle}(\bA, \bB) :=
  \max_{\ba \in \bA_0,\,||\ba||=1}\Big(
  \min_{\bb \in \bB_0,\,||\bb||=1}
  \ang(\ba,\bb)\Big).
\end{equation*}
The maxangle equals $\pi/2$ iff there is a 
vector in $\bA_0$ that is orthogonal to all 
the vectors in $\bB_0$ (and vice versa).
It is provided by the \textsf{R} 
function \texttt{pracma::subspace()}.

In casewise robustness, it is natural to
assume that a PCA method is orthogonally
equivariant. This means that when you
apply an orthogonal transformation $\bR$ 
to your data, the principal subspace will
transform accordingly:
$\bhP(\bX \bR^{\top}) = \bR\,\bhP(\bX)$.
But also here, this natural condition
would prevent a high cellwise breakdown
value:

\begin{proposition} \label{prop:PCA}
The cellwise breakdown value of any
orthogonally equivariant PCA estimator 
$\bhP$ is bounded by
\begin{equation*} \label{eq:pcabdv}
  \varepsilon_n^*(\bhP, \bX) \leqslant
  \left\lceil \frac{n-1}{d} 
  \right\rceil/n\, \approx \frac{1}{d}\;.
\end{equation*}
\end{proposition}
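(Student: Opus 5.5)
The plan is to follow the same construction that underlies Proposition~\ref{prop:impl} and the remark after Proposition~\ref{prop:loc}. By changing one cell per row we can move all the data onto a chosen hyperplane. We then use orthogonal equivariance, through an orthogonal map that leaves the contaminated data unchanged, to pin down where $\bhP(\bX^m)$ can lie. The estimate is an affine subspace, so I will work after shifting so that the one untouched case is at the origin. I assume here that the PCA estimators considered are also translation equivariant; otherwise the reflection argument below must be phrased about that case.

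\emph{Step 1 (budget).} Keep one case, say $\bx_n$, intact. Split the remaining $n-1$ rows into $d$ groups $G_1,\ldots,G_d$ of size at most $m=\lceil (n-1)/d\rceil$. In every row of $G_j$ only the $j$-th cell will be altered, so each column receives at most $m$ replaced cells, as the definition of $\bX^m$ requires.

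\emph{Step 2 (hyperplane).} Take a unit vector $\bv$ whose components are all nonzero, and let $H=\{\bx:\bv^{\btop}(\bx-\bx_n)=0\}$. For a row $i\in G_j$, the line $\bx_i+t\be_j$ meets $H$ at $t=-\bv^{\btop}(\bx_i-\bx_n)/v_j$, because $v_j\neq 0$. Replacing $x_{ij}$ by $x_{ij}+t$ therefore puts every row of $\bX^m$ on $H$, including $\bx_n$.

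\emph{Step 3 (equivariance).} Let $\bR=\bI-2\bv\bv^{\btop}$ be the reflection in $H_0$, conjugated by the translation to $\bx_n$. It fixes every row of $\bX^m$, so equivariance gives $\bhP(\bX^m)=\bR\,\bhP(\bX^m)$. Hence the direction space $\bhP(\bX^m)_0$ is $\bR$-invariant, which means it splits as $(\bhP(\bX^m)_0\cap H_0)\oplus(\bhP(\bX^m)_0\cap\sspan(\bv))$. So either $\bv\in\bhP(\bX^m)_0$, or $\bhP(\bX^m)_0\subset H_0=\bv^\perp$. Now choose $\bv$ in the orthogonal complement of $\bhP(\bX)_0$, which has dimension $d-k\geqslant 1$. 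Such a $\bv$ with all components nonzero exists unless that complement lies in a coordinate hyperplane; if it does, I would use a limiting sequence $\bv_t\to\bv$ and the supremum in \eqref{eq:bdvPCA}. In the first alternative, $\bv$ is a unit vector of $\bhP(\bX^m)_0$ orthogonal to all of $\bhP(\bX)_0$, so the maxangle equals $\pi/2$ and breakdown occurs.

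\emph{Main obstacle.} The hard part is the second alternative, $\bhP(\bX^m)_0\subset\bv^\perp$, since this does not by itself force the maxangle to $\pi/2$. My first attempt would use the remaining freedom within $H$. The points of $G_j$ can be sent to $\bx_n+s\bw+\ldots$ only along $\be_j$, so the fine position is fixed. However, we may choose several normals $\bv^{(1)},\ldots,\bv^{(r)}$, all orthogonal to $\bhP(\bX)_0$, and intersect the corresponding hyperplanes. With $r=d-k$ this is impossible using one moved cell per row. So instead I would use a second reflection, in a hyperplane containing $\bv$, to show that the estimate cannot depend continuously on the data without containing $\bv$ for some configuration. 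If this fails, the fallback is to argue, as in Proposition~\ref{prop:impl}, via the sequence of hyperplanes $H_t$ tilted toward $\bhP(\bX)_0^\perp$. Along this sequence the subspaces $\bhP(\bX^m)\subset H_t$ must eventually lose every direction of $\bhP(\bX)_0$, giving a supremum angle of $\pi/2$.
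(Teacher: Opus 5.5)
Your reflection dichotomy in Step~3 is correct, and its first branch does give breakdown, but the proof stops there. The branch $\bhP(\bX^m)_0\subset\bv^{\perp}$ yields nothing. Because you took $\bv\perp\bhP(\bX)_0$, the original direction space also lies in $\bv^{\perp}$, so the estimator may return a translate of $\bhP(\bX)$ lying in $H$. None of your fallbacks repairs this:
\begin{itemize}
\item A reflection in a hyperplane containing $\bv$ does not fix the contaminated rows, which are spread over $H$. Equivariance therefore gives no constraint from it.
\item No continuity of $\bhP$ is available to exploit.
\item For tilted hyperplanes $H_t$, the containment $\bhP(\bX^m)\subset H_t$ restricts the angle with $\bhP(\bX)_0$ only when the normals of $H_t$ approach a vector of $\bhP(\bX)_0$. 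In that regime the other branch (the estimate contains the normal) is the harmless one.
\end{itemize}
No single normal can serve both branches. Containing $\bv$ forces breakdown only if $\bv\perp\bhP(\bX)_0$, while lying in $\bv^{\perp}$ forces it only if $\bv\in\bhP(\bX)_0$.

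The paper takes the second orientation. It chooses the normal to be a unit vector of the original direction space, its $\be_1\in\bP_0$, so that $\bH=\be_1^{\perp}$ up to a shift. It then invokes property \eqref{eq:pcaEFP}, that the estimate lies in any affine subspace containing all the data, to get $\bhP(\bX^m)\subset\bH$. Hence $\be_1\in\bhP(\bX)_0$ is orthogonal to all of $\bhP(\bX^m)_0$, and the maxangle is $\pi/2$. A nondegeneracy assumption on $\bhP(\bX)$ supplies the nonzero entries your Step~2 needs. Relative to the paper, you are missing this flip together with \eqref{eq:pcaEFP}.

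Be aware, though, that \eqref{eq:pcaEFP} is exactly what your dichotomy shows orthogonal equivariance cannot deliver by itself. The paper's derivation concludes from $\bD\,\bhP(\bX)=\bhP(\bX)$ that the last $d-k$ coordinates vanish. That step overlooks $\bD$-invariant subspaces containing $(-1)$-eigenvectors of $\bD$. For example, take the affine subspace through the sample mean whose direction is spanned by eigenvectors of the $k$ smallest eigenvalues of the sample covariance. It is orthogonally equivariant, yet for data spanning a hyperplane it contains the normal. With the paper's orientation, the open branch simply becomes $\be_1\in\bhP(\bX^m)_0$.

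To get a complete proof, state \eqref{eq:pcaEFP} as a hypothesis on $\bhP$, together with translation equivariance (which the paper also uses implicitly when it shifts to the origin). This is what the appendix version of the proposition does. Your Steps~1--2 with $\bv$ taken in $\bhP(\bX)_0$ then finish the argument in two lines. Under orthogonal equivariance alone, neither your argument nor the paper's closes the remaining branch.
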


\subsection{Cellwise robust PCA
methods} \label{sec:PCA_methods}

An early proposal for the related 
problem of cellwise robust factor
analysis was made by \cite{croux2003RAR}.
The first cellwise robust PCA method
was proposed by \cite{de2003framework} 
and improved by \cite{maronna2008robust}. 
Instead of minimizing the Frobenius norm 
of $||\bX - \bhX||_F$ they minimize a 
robust approximation error by applying 
a bounded function $\rho$ to each cell 
of $\bX - \bhX$ and summing the 
resulting errors. The minimization 
is carried out by iteratively solving 
weighted least squares problems.
The bounded $\rho$ function guarantees
that a single cell 
cannot dominate the approximation error.
The MacroPCA method proposed by
\cite{hubert2019macropca} was the first 
all-in-one method capable of handling
missing values as well as being robust
to cellwise and casewise outliers. 
It starts from DDC and 
incorporates elements of ROBPCA. 
It imputes
cells that caused data points to lie 
far from the fitted subspace, and
produces cellwise residuals that can
be plotted.

The recent cellPCA method of \cite{cellPCA} 
improves on MacroPCA by using a single 
objective function that combines two robust 
loss functions. It approximates
$\bX$ by $\bhX := \widehat{\bX^0} 
+ \bone_n\bhmu^{\btop}$ with 
$\widehat{\bX^0}$  an $n \times d$ matrix 
of rank $k < d$, obtained by minimizing 
\begin{equation} \label{eq:cellPCA}
   \frac{\hsigma_2^2}{m}\sum_{i=1}^{n} m_i \rho_2\! 
  \left(\frac{1}{\hsigma_2}\sqrt{\frac{1}{m_i}
  \sum_{j=1}^{\di} m_{ij}\, \hsigma_{1,j}^2\,
  \rho_1\!\left(\frac{x_{ij}-\widehat{x}_{ij}}
  {\hsigma_{1,j}}\right)}\, \right)\, .
\end{equation}
Here $m_{ij}$ is 0 if $x_{ij}$ is missing and 1 
otherwise, $m_i=\sum_{j=1}^{d} m_{ij}$\,,
and $m=\sum_{i=1}^{n} m_i$\,.
The idea is to limit the impact of  
the standardized cellwise residuals $(x_{ij} - 
\widehat{x}_{ij})/\hsigma_{1,j}$ by a bounded loss 
function $\rho_1$. Similarly, $\rho_2$ mitigates 
the effect of casewise outliers.

The objective function is minimized by an 
iteratively reweighted least squares algorithm. 
For each case it yields a casewise weight 
$w^{\case}_{i}$ and cellwise weights 
$w^{\cell}_{ij}$ that transition smoothly from 
1 for clear inliers to 0 for far outliers.

The cellPCA method also provides imputed cases
$\bimpx_i$  in which suspicious cells are cleaned
and missing cells are filled in. The modified 
cells are defined such that $\bimpx_i$ is 
shrunk toward the PCA subspace, and that the 
orthogonal projection of $\bimpx_i$ coincides 
with the fitted $\bhx_i$\,. This imputation is 
illustrated in the top panel of 
Figure~\ref{fig:cellPCA} for a 1-dimensional 
principal subspace in 3-dimensional
space. Most points are projected orthogonally on 
the fitted subspace, but the first cell of point 
$a$ is imputed before projecting it, and the 
second cell of $b$, and the third cell of $c$.
None of these cells are marginally outlying.

\begin{figure}[ht]
\centering
\includegraphics[width=0.9\textwidth]
  {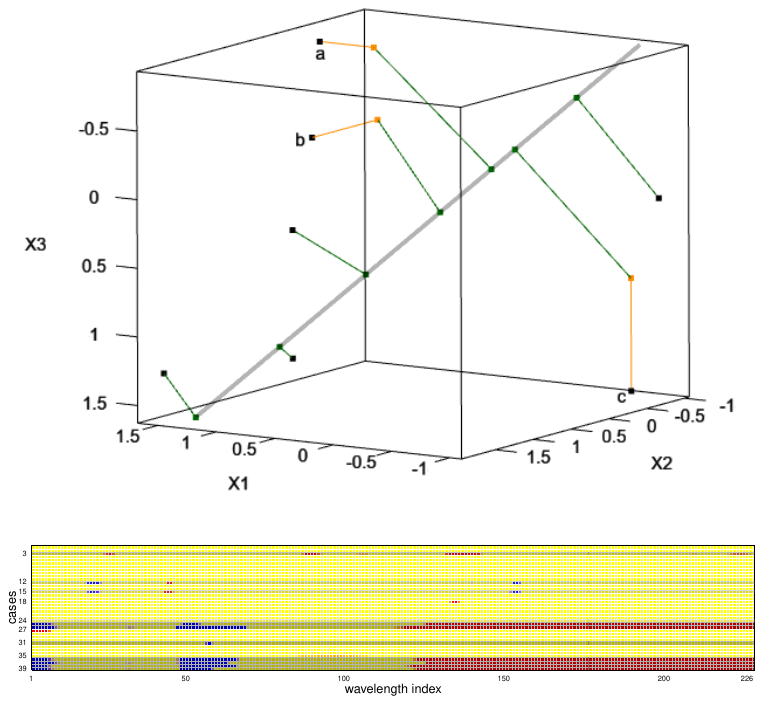}
\caption{Upper panel: Illustration of cellPCA
imputation for $d=3$ and $k=1$. The first cell 
of point $a$ was imputed before projecting it on 
the principal subspace, the second cell of $b$, 
and the third cell of $c$. Lower panel: Shaded 
residual cellmap of the octane data.}
\label{fig:cellPCA}
\end{figure}

After obtaining the PCA subspace, the algorithm 
carries out an additional step by applying the 
casewise MCD estimator to the raw $k$-variate 
scores. This yields an orthogonal set of $k$ 
loading vectors collected in the loadings matrix
$\bV \in \mathbb{R}^{d \times k}$, as well as a 
scores matrix  $\bU \in \mathbb{R}^{n \times k}$ 
such that 
$\bhX := \widehat{\bU} \widehat{\bV}^{\btop} + 
 \bone_n\bhmu^{\btop}$.

Based on the cellPCA results, a \textit{shaded 
residual cellmap} can be drawn 
\citep{Hirari:cellGraphics}. The lower panel of
Figure~\ref{fig:cellPCA} is the map of the 
octane data of \cite{Esbensen:MultiAnalysis} that 
contains near infrared (NIR) absorbance spectra of 
$n = 39$ gasoline samples over $d = 226 > n$ 
wavelengths. First the cells are colored as in 
Figure~\ref{fig:cellmap_Topgear} according to their 
standardized cellwise residual. Next, each case
with low casewise weight is 
represented by a shade of grey overlaid on its
row. Note the resemblance with the rightmost
panel of Figure~\ref{fig:casecellmixed}.

\cite{Hirari:cellGraphics} also construct other 
graphical displays. \cite{Pfeiffer:cellsparsePCA} 
propose a cellwise robust PCA method that yields 
sparse loadings.

The casewise influence function (IF) is a 
fundamental tool in robust statistics 
\citep{hampel1986robust}. It measures an 
estimator’s sensitivity to a small mass of
contamination in different positions in the data 
space. \cite{alqallaf2009} proposed a cellwise 
version, and computed it for a multivariate 
estimator of location. The cellwise IF of 
cellPCA is derived in \cite{cellPCA}. It is 
bounded and redescending. 

\subsection{Regularized covariance matrices} 
\label{sec:cellRCov}

The covariance estimators discussed in 
Section~\ref{sec:cov} require that $n > d$, and 
most are computationally feasible in up to about 
20 dimensions. In high-dimensional settings,
nonrobust regularized covariance estimators have 
been constructed to avoid ill-conditioned or 
singular covariance estimates. \cite{cellRCOV} 
proposed a robust version called Cellwise 
Regularized Covariance (cellRCov) that works well 
in the presence of casewise and cellwise outliers.
It decomposes the covariance matrix into two 
components. Given a $d$-dimensional random 
variable $X$ with covariance matrix $\bSigma$, 
we can express it as the sum of a component 
lying in the $k$-dimensional subspace of 
$\mathbb{R}^{d}$ spanned by the first 
$k$ eigenvectors of $\bSigma$, and another in 
its orthogonal complement:
\begin{equation*}
   X=X^k + X^{\perp}\, .
\end{equation*}
For a given data set of size $n$, we first 
standardize its variables robustly, yielding 
the dataset $\bZ$ in $\mathbb{R}^{n \times d}$. 
The cellRCov estimate of $\bZ$ is then defined 
as 
\begin{equation*}
  \bhSigma(\bZ) := 
  \bhSigma_{\bZ^k}+\bhSigma_{\bZ^{\perp}}\;.
\end{equation*}
Here the first covariance estimate 
$\bhSigma_{\bZ^k}$ is obtained from cellPCA 
applied to $\bZ$, whereas 
$\bhSigma_{\bZ^{\perp}}$ is computed as a 
regularized weighted covariance matrix of 
the residuals $\bZ^{\imp}-\widehat{\bZ}$
of the imputed data. The resulting covariance 
matrix $\bhSigma(\bZ)$ is then transformed 
back to the unstandardized data.

Making use of cellRCov, a cellwise and casewise 
robust method for canonical correlation analysis 
was constructed also.

\subsection{Estimating a precision matrix} 
\label{sec:GLASSO}

In many applications it is the inverse 
of the covariance matrix, also known as 
the precision matrix, that is of major 
interest. 
In low dimensions, it is of course 
possible to estimate a covariance matrix 
and then to invert the estimate, but 
this is no longer the case in 
high dimensions because the estimated
covariance matrix is often singular. 
Rather than regularizing the covariance 
matrix, one can then estimate the 
precision matrix and regularize it
at the same time. 
This induces zeroes in the precision 
matrix, which correspond to partial 
correlations of zero which in turn 
correspond to conditional independence 
in the Gaussian graphical model. 
A popular estimator of a high-dimensional 
precision matrix is the graphical lasso 
of \cite{friedman2008sparse}. 
It requires the input of an estimate 
of the covariance matrix $\bhSigma$, 
for which the classical empirical 
covariance is often used. The precision 
matrix $\bTheta$ is then estimated as
\begin{equation} \label{eq:GLASSO}  
  \bhTheta  = 
  \argmin_{\bTheta \succcurlyeq 0} 
	\left(\tr\left(\bhSigma \bTheta\right) 
	- \log(\det(\bTheta)) + 
	\lambda \sum_{j\neq \ell} 
    |\bTheta_{j\ell}|\right)\;.
\end{equation}
A natural question is whether a robust 
version of $\bhSigma$ would result in a 
robust precision matrix estimator, and 
this is indeed the case. 
This idea has been studied 
empirically and theoretically by 
\cite{ollerer2015robust}, 
\cite{croux2016robust}, 
\cite{tarr2016}, \cite{loh2018high}, 
and \cite{katayama2018robust}, using 
different $\bhSigma$ based on robust 
pairwise correlations.
The $\bhSigma$ in these studies 
are typically of the form 
\begin{equation}
 \bhSigma_{j\ell} = \mbox{s}\!\left(
  \bX_{\bdot j}\right)\mbox{s}\!\left(
  \bX_{.\ell}\right) \mbox{r}\!\left(
  \bX_{\bdot j}, \bX_{.\ell}\right)
\end{equation}
where $\mbox{s}(\cdot)$ denotes a robust 
scale estimator, often the $Q_n$ estimator 
of \cite{rousseeuw1993alternatives}, and 
$\mbox{r}(\cdot)$ denotes a robust 
correlation estimator such as Spearman's 
rank correlation. 
Note that $\bhSigma$ needs to be positive 
semidefinite for~\eqref{eq:GLASSO} to 
work. When $\bhSigma$ is not positive 
semidefinite it is first modified, e.g.\
by adding a small multiple of the 
identity matrix. 
The resulting precision matrix 
estimators perform well under 
non-adversarial cellwise contamination, 
are fast to compute, and can be analyzed
theoretically due to the well-defined 
objective function~\eqref{eq:GLASSO}. 
But since they focus on pairwise 
correlations, their performance is 
likely to suffer under more adversarial 
cellwise outliers. 

\section{TENSOR DATA}
\label{sec:tensor}
Tensors extend two-way matrices into 
higher-dimensional arrays, and contain many 
cells, some of which may be corrupted. 
Casewise robust methods for tensor data 
discard or downweight complete cases 
(tensors), see e.g. \citep{Pravdova:RobTuck3,
Engelen:robParafac, Hubert:RParafac-SI,
Mayrhofer:RobCovMatrix}. This however leads to 
a large loss of efficiency when cases contain
only a few outlying cells.

Similarly to PCA, tensor decompositions aim to 
reduce the dimensionality of the data. Given a 
set of $n$ independent tensors 
$\mathcal{X}_i=\sbr{x_{i, p_1 \ldots p_L}} \in 
\mathbb{R}^{P_1 \times \cdots \times P_L}$ (for 
$i=1,\ldots,n)$, the objective of Multilinear 
Principal Component Analysis (MPCA) is to find 
a collection of  orthogonal matrices 
$\bV^{(\ell)} = \sbr{v_{p_\ell k_\ell}^{(\ell)}}
\in \mathbb{R}^{P_\ell \times K_\ell}$ of rank 
$K_\ell \leqslant P_\ell$, together with a center 
$\mathcal{C}=\sbr{c_{p_1 \ldots p_L}} \in 
\mathbb{R}^{P_1\times \cdots \times P_L} $ and 
core tensors 
$\mathcal{U}_i = [u_{i,k_1\ldots k_L}] \in 
\mathbb{R}^{K_1 \times \cdots \times K_L}$, such 
that the reconstructed tensors
$\widehat{\mathcal{X}}_i$ with entries
\begin{equation} 
\label{eq:MPCA}
 \widehat{x}_{i,\idx} = c_{\idx} + 
 \sum_{k_1=1}^{K_1} \sum_{k_2=1}^{K_2} \cdots 
 \sum_{k_L=1}^{K_L}  u_{i,k_1\ldots k_L} 
 v_{p_1k_1}^{(1)} \dots v_{p_Lk_L}^{(L)}
\end{equation}
are good approximations of the original tensors 
$\mathcal{X}_i$ \citep{Lu:MPCA}. The decomposition 
in~\eqref{eq:MPCA} is called a Tucker 
decomposition. 
When all $K_\ell$ are equal and the core tensors 
are required to be superdiagonal (which means that 
the only non-zero elements are those where all 
indices are identical), a CANDECOMP/PARAFAC (CP) 
decomposition is obtained 
\citep{BallardKolda:bookTensor, Bi:Tensors}.

\cite{Inoue:RMPCA} introduced robust adaptations
of MPCA that can either deal with casewise or 
cellwise outliers in the data, but not 
simultaneously. They either apply a bounded $\rho$ 
function to the Frobenius norm of the residual 
tensor $\mathcal{X}-\widehat{\mathcal{X}}$, or to
its entries. 

The robust multilinear PCA (ROMPCA) method of 
\cite{Hirari:ROMPCA} can 
cope with both types of outliers simultaneously, 
and allows for missing values in the tensors as 
well. It generalizes ideas from cellPCA to the 
tensor setting by combining two bounded $\rho$ 
functions that act at the casewise and at the 
cellwise level. For each case it also provides 
an imputed tensor 
$\mathcal{X}^{\imp}_i$ and a weight 
$w^x_i$ that is 0 when either 
$\|\mathcal{X}_i-\widehat{\mathcal{X}}_i\|_F$ 
is too large, or when its estimated core tensor 
deviates highly from the regular core tensors. 

Residual cellmaps can be drawn for the unfolded 
tensors, or for some slices. Consider for example
the Dog Walker dataset  which is a surveillance 
video of a man walking his dog, available at
\url{https://www.wisdom.weizmann.ac.il/~vision/SpaceTimeActions.html}\,.\linebreak
The video was
filmed using a static camera and contains 54 color 
frames. Each case is a frame with $72 \times 90$ 
pixels stored using the RGB (Red, Green, Blue) 
color model, and thus corresponds with a 3rd order 
tensor. The top row of Figure~\ref{fig:DogWalker}
contains some frames, and the second row shows the 
corresponding ROMPCA residual maps of the blue 
color. The first and last frames also show numerous 
deviating pixels in the background. They mostly 
result from tree leaves moving in the wind. The 
residual cellmap in the 3rd row is 
obtained by vectorizing each frame and stacking 
the vectors rowwise. Adjacent 
columns are aggregated, and their average 
color is displayed. The last panel depicts the 
cellmap as a 3D array. We clearly see that most of 
the outlying cells correspond to the man walking 
his dog. He is successfully flagged as outlying, 
as this is the only part that changes 
substantially in the successive frames.  

\begin{figure}[ht]
\centering
\includegraphics[width=1.0\textwidth]
  {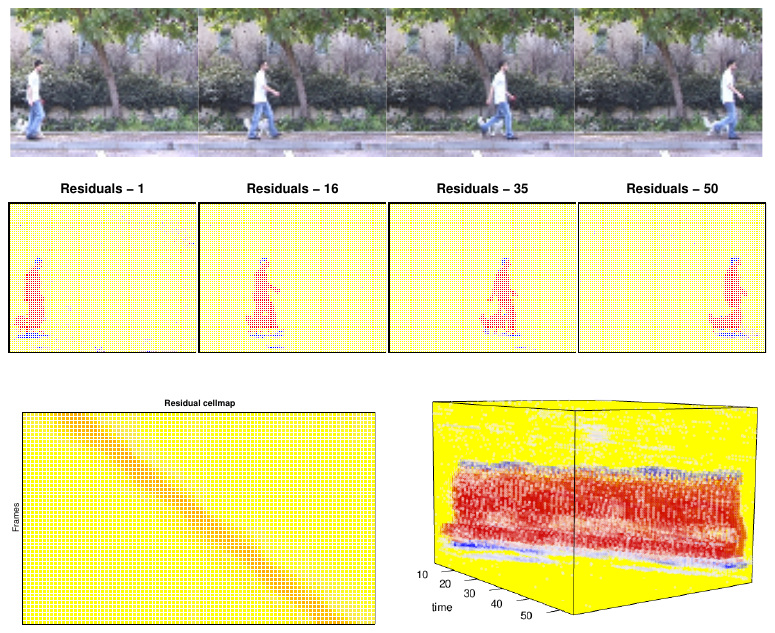}
\caption{Some frames of the Dog Walker data, and 
the corresponding ROMPCA cellmaps. The third row 
shows the residual cellmap of the unfolded video 
frames and a 3D cellmap.}
\label{fig:DogWalker}
\end{figure}

\cite{Hubert:MacroPARAFAC} introduced a casewise 
and cellwise robust method for CP decomposition. 
Their MacroPARAFAC method starts by applying 
FastDDC to the unfolded tensors, and then follows 
several principles underlying MacroPCA. It 
estimates loadings and the core vectors (the 
entries on the superdiagonal of the core tensor) 
iteratively by an alternating least squares 
PARAFAC algorithm for incomplete data, thereby 
updating the imputations for the cellwise outliers 
and missing values. 

The linear regression model in~\eqref{eq:reg} 
can be generalized to tensor-on-tensor (TOT) 
regression where both the predictor 
$\mathcal{X}_i \in \mathbb{R}^{P_1 \times \cdots 
\times P_L}$ and the response $\mathcal{Y}_i 
\in \mathbb{R}^{Q_1 \times \cdots \times Q_M}$
are tensors. The TOT regression model of 
\cite{lock:TOT} assumes that 
\begin{equation*}
 y_{i,q_1\cdots q_M} = \beta_{0,q_1 \cdots q_M} + 
 \sum_{p_1=1}^{P_1} \cdots \sum_{p_L=1}^{P_L}  
 x_{i,p_1 \cdots p_L} 
 \beta_{p_1 \cdots p_L q_1 \cdots q_M} + 
 \eps_{i,p_1 \cdots p_L q_1 \cdots q_M} 
\end{equation*}
where $\mathcal{B}_0 \in \mathbb{R}^{Q_1 \times 
\cdots \times Q_M}$ is the intercept, 
$\mathcal{B} \in \mathbb{R}^{P_1 \times \cdots 
\times P_L \times Q_1\times \cdots \times Q_M}$ 
is the slope tensor, and $\mathcal{E}_i \in 
\mathbb{R}^{Q_1 \times \cdots \times Q_M}$ is 
the error tensor. Moreover, it is assumed that 
$\mathcal{B}$ can be represented by a CP 
decomposition of rank $R$. \cite{Lee:RTOT} 
introduced a robust adaptation of TOT regression 
that can handle cellwise outliers in the response 
tensor. \cite{Hirari:ROTOT} proposed the ROTOT 
method that can also cope with casewise and 
cellwise outliers in the predictors. It starts 
by replacing the predictors with their ROMPCA 
imputed tensors, and then applies a weighted 
ridge M-regression to the cases whose ROMPCA 
weight $w_i^x$ is different from 0. 
Out-of-sample predictions are obtained as in 
Figure~\ref{fig:cellLTS}, based on the ROMPCA 
imputations of the predictors. 

\section{OTHER MODELS}
\label{sec:other}

Apart from the estimation of location and
covariance matrices, regression, PCA,
and tensor analysis covered in the preceding 
sections, the cellwise paradigm has made
inroads in many other statistical settings.
We will give a brief summary.

\vspace{1mm}
\begin{itemize}
\item \textbf{Detecting outlying
cells in other settings.}
The DDC method described in 
Section~\ref{sec:detect} operates on
numeric data, and has been extended to
some other data types. 
\cite{walach2020cellwise} used a 
similar approach for metabolomics
data by aggregating outlyingness over 
pairwise log-ratios. \cite{markatou2025}
extended DDC to contingency tables, in
the context of detecting side effects of
medical products.

\vspace{1mm}
\item \textbf{Cellwise robust 
discriminant analysis.} 
\cite{aerts2017cellwise} construct
cellwise robust versions of linear 
and quadratic discriminant analysis.
To this end they estimate a cellwise
robust precision matrix of the type of 
\cite{croux2016robust} for each class, 
and plug these into the assignment rule. 
More recently \cite{centofanti2026cellda} 
introduced a robust out-of-sample 
prediction rule, for dealing with new data 
that themselves contain outlying cells.

\vspace{1mm}
\item \textbf{Cellwise robust cluster
analysis.}
Several methods for clustering data with 
cellwise outliers have been proposed. 
The earliest was \cite{Farcomeni2014}, 
who used a mixture model with an 
additional $n \times \di$ parameter 
matrix of zeroes and ones, indicating 
which cells should be flagged and set 
to missing. 
The goal was to maximize the observed 
likelihood of the unflagged cells. 
The method was implemented in the 
\textsf{R} package \texttt{snipEM} 
\citep{Farcomeni2019} that is no longer 
maintained. 

\cite{garcia2021cluster} consider a
model in which clusters are linear,
i.e. they lie near lower-dimensional
subspaces. Each cluster is then fitted
by a cellwise robust PCA in the style 
of \cite{maronna2008robust}. 
They minimize the robust scales
of the coordinatewise residuals as in
least trimmed squares regression. 
The algorithm iterates the following
three steps: (i) updating the subspace 
parameters by robust PCA, (ii) updating 
the weights indicating which cells are
deemed outlying, and (iii) updating 
the group memberships as in the usual 
$k$-means clustering algorithm.

The cellGMM method of 
\cite{zaccaria2025cluster}
is aimed at finding ellipsoidal clusters
instead. The authors assume a Gaussian 
mixture model (GMM), in which they use 
the cellMCD method described in 
Subsection~\ref{sec:cov_methods} to fit 
centers and covariance matrices and to 
impute outlying cells. This is alternated 
with EM-type updates of the cluster 
memberships by constrained optimization of 
the cellGMM objective function.
\cite{zaccaria2025fuzzy} extend this work
to the cellFCLUST method for fuzzy
clustering.

\vspace{1mm}
\item \textbf{Time series.}
An application of cellwise robust methods
to autoregressive time series was described 
in \cite{challenges}.
\cite{Rodessa} developed a cellwise robust 
version of singular spectrum analysis 
called RODESSA, using a kind of dimension 
reduction.

\vspace{1mm}
\item \textbf{Functional data.}
In functional data analysis, a case
is not a row of a matrix but a function 
such as a curve. 
In that setting, cellwise outliers
correspond to local deviations (like
spikes) in a curve, rather than global
outlyingness of the entire curve.
In the taxonomy of \cite{Hubert2015SMA}
of functional outliers these are
called {\it isolated outliers}, and 
the authors constructed methods to detect 
them. \cite{garcia2021cluster} apply
cellwise robust methods to functional 
data.

\vspace{1mm}
\item \textbf{Compositional data.}
For regression with compositional 
covariates, \cite{vstefelova2021robust} 
propose a four-step procedure: 
(i) detecting outlying cells by DDC, 
(ii) imputing the outlying cells that 
are not in outlying cases, 
(iii) running  robust regression
on the imputed data, and 
(iv) performing inference based on multiple 
imputation.
\cite{Templ2026} made a methodological study
of the effect of cellwise contamination on 
compositional data.
\end{itemize}

\vspace{1mm}
\cite{agostinelli2016composite} developed
a cellwise approach for linear mixed models,
and \cite{Peng2025} studied cellwise
robust conformal inference.

\section{CONCLUSION AND OUTLOOK} 
\label{sec:conc}

Cellwise outliers form a new paradigm in
statistics and machine learning. We have 
seen that addressing them requires
abandoning several cherished types of
equivariance. But even though the problem
was tough, the community has stepped up
and progress has been made. Here we 
surveyed developments in the
detection of cellwise outliers as well as 
robust estimation of covariance matrices,
linear regression, dimension reduction,
tensor data, and other settings. 

Much work remains to be done on other 
frameworks, algorithms, and graphical 
displays to facilitate interpretation 
and communicating results. Here are some
topics for future research:
\begin{itemize}
\vspace{-2mm}
\item cellwise robust independent
  component analysis;
\item cellwise robust regression 
  for functional data;
\item detecting deviating data cells
  when the dataset is of mixed type,
  with both numerical and categorical
  variables;
\item cellwise robust versions of 
  chemometrics tools, such as partial 
  least squares and multivariate curve
  resolution.\\
\end{itemize}

\noindent{\bf Software availability.}
The code for several cellwise robust 
methods described here is available in 
the \textsf{R} package \texttt{cellWise} 
\citep{cellWise}. Other software is
listed in the text or referenced in
the cited papers.\\

\noindent{\bf Acknowledgment.}
We are grateful to Fabio Centofanti for
interesting discussions on this topic.

\spacingset{1.1}


\clearpage
\pagenumbering{arabic}
\appendix
\section*{Appendix}

\setcounter{equation}{0} 
\renewcommand{\theequation}
  {A.\arabic{equation}} 

\spacingset{1.12} 

Here we give some technical
formulations and proofs of results
mentioned in the text.

For estimating the location $\bmu$ of 
multivariate data, we can prove that any 
orthogonally equivariant estimator 
$\bhmu$ satisfies the natural sounding 
property
\begin{equation}
\begin{aligned}\label{eq:locEFP}
 \mbox{if all points of } \bX
 \mbox{ lie in a lower-dimensional}\\ 
 \mbox{ affine subspace, then } 
 \bhmu(\bX)
 \mbox{ lies in that subspace as well.}
\end{aligned}
\end{equation}
Next, one shows that all the points of $\bX$ 
can be moved to a hyperplane of choice by 
changing just one cell in each row, which can
be done in a way that no more than 
$\lceil n/d \rceil$ cells are changed in each 
column of $\bX$. The proofs, and those of 
the next results, can be found in the appendix 
of \cite{challenges}.

Analogously, we can show that any affine
equivariant covariance matrix $\bhSigma$
satisfies
\begin{equation}\label{eq:covEFP}
\begin{aligned}
 \mbox{if all points of } \bX
 \mbox{ lie in a lower-dimensional}\\ 
 \mbox{affine subspace, then } 
 \bhSigma(\bX)
 \mbox{ is singular.}
\end{aligned}
\end{equation}
Shifting a small number of cells per column
then yields Proposition~\ref{prop:impl}.

In the regression setting, one can prove that
the combination of regression and scale
equivariance implies:
\begin{equation}\label{eq:regEFP}
\begin{aligned}
 \mbox{if } y_i = 
 [1\,;\,\bx_i^{\btop}] \bgamma_0 
 \mbox{ for all $i=1,\ldots,n$ and } 
 \bX 
 \mbox{ is not}\\
 \mbox{in a lower-dimensional}
 \mbox{ subspace, then }
 \bhgamma = \bgamma_0\;.
\end{aligned}
\end{equation}
Replacing a small number of cells
then proves Proposition~\ref{prop:reg}.

For the breakdown value of a PCA method, 
we consider the following property:
\begin{equation}
\begin{aligned}\label{eq:pcaEFP}
 \mbox{if all points of } \bX
 \mbox{ lie in an} 
 \mbox{ affine subspace,}\\ 
 \mbox{then }
 \bhP(\bX)
 \mbox{ lies in that subspace as well.}
\end{aligned}
\end{equation}

\begin{proposition} 
The cellwise breakdown value of a 
PCA estimator $\bhP$ satisfying
condition~\eqref{eq:pcaEFP} at a dataset 
$\bX$ for which $\bhP(\bX)$ is not 
parallel to any coordinate axis, is 
bounded by
\begin{equation*}
  \varepsilon_n^*(\bhP, \bX) \leqslant
  \left\lceil \frac{n-1}{d} 
  \right\rceil/n\, \approx \frac{1}{d}\;.
\end{equation*}
\end{proposition}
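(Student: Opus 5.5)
Plan: build a contaminated dataset $\bX^m$ that changes at most $\lceil (n-1)/d\rceil$ cells per column and whose points all lie in an affine subspace $\bH$ containing a direction orthogonal to $\bhP(\bX)$. Condition~\eqref{eq:pcaEFP} then forces $\bhP(\bX^m)\subset\bH$. The construction needs a free parameter so that this forced inclusion pushes the maxangle in~\eqref{eq:bdvPCA} to $\pi/2$, or arbitrarily close, so that the supremum equals $\pi/2$.

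\textbf{Step 1 (choosing the hyperplane).}
\begin{itemize}
\item Keep one case untouched, say $\bx_1$; this explains the $n-1$ in the bound.
\item Pass a hyperplane $\bH$ through $\bx_1$ with normal vector $\bv$ chosen so that every coordinate $v_j\neq 0$.
\item Arrange that $\bH$ contains a unit direction $\bb$ orthogonal to the linear subspace $\bhP(\bX)_0$ (with $k<d$ such a $\bb$ exists). More precisely, I want every $k$-dimensional subspace of $\bH_0$ to make maxangle $\pi/2$ with $\bhP(\bX)_0$.
\item That requires $\dim(\bH_0\cap \bhP(\bX)_0^{\perp})$ to be large enough. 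Since $\bH_0$ has dimension $d-1$, in general it only meets $\bhP(\bX)_0^{\perp}$ in dimension $d-1-k$, which does not force anything.
\end{itemize}
So I would instead move the data into a lower-dimensional affine subspace $\bH$ of dimension $k$, namely $\bH=\bx_1+\mbox{span}(\bb_1,\dots,\bb_k)$ with $\bb_1\perp\bhP(\bX)_0$. Then $\bhP(\bX^m)$, being $k$-dimensional and inside $\bH$, equals $\bH$. Its direction $\bb_1$ is orthogonal to all of $\bhP(\bX)_0$, so the maxangle is exactly $\pi/2$.

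\textbf{Step 2 (cell budget).} Moving a point into a $k$-dimensional affine subspace generally costs $d-k$ cells, not one. So Step~1 must be reconciled with one changed cell per row, and I would go back to a hyperplane and use the freedom in its normal.
\begin{itemize}
\item Changing cell $j$ of row $i$ moves $\bx_i$ along $\be_j$. This lands the point on the hyperplane $\{\bx:\bv^{\top}(\bx-\bx_1)=0\}$ exactly when $v_j\neq 0$.
\item The hypothesis that $\bhP(\bX)$ is not parallel to any coordinate axis is what lets me pick $\bv$ with all $v_j\ne0$ and $\bv$ in the right relation to $\bhP(\bX)_0$.
\item Take $\bv$ to be a unit vector in $\bhP(\bX)_0$ with all coordinates nonzero. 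It exists because $\bhP(\bX)_0$ is not contained in any coordinate hyperplane, and a generic vector of the subspace avoids the finitely many hyperplanes $v_j=0$.
\item Then $\bH_0=\bv^{\perp}$, and $\bv\in\bhP(\bX)_0$ is orthogonal to every vector of $\bH_0$.
\item Any $\bhP(\bX^m)\subset\bH$ therefore has $\bhP(\bX^m)_0\subset\bv^{\perp}$. Read with the maxangle taken from the side of $\bhP(\bX)$, the vector $\bv$ is orthogonal to all of $\bhP(\bX^m)_0$, giving angle $\pi/2$. I would check here that the paper's symmetric reading (``and vice versa'') of the maxangle allows this.
\item Assign the $n-1$ rows $i\ge2$ cyclically to columns $1,\dots,d$ and change only the assigned cell, solving a linear equation in that one coordinate. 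Each column gets at most $\lceil (n-1)/d\rceil$ changes.
\end{itemize}

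\textbf{Step 3.} Apply~\eqref{eq:pcaEFP} to $\bX^m$, all of whose points lie in $\bH$, to conclude $\bhP(\bX^m)\subset\bH$. Step 2 then gives maxangle $\pi/2$, hence $\varepsilon_n^*\le\lceil (n-1)/d\rceil/n$.

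\textbf{Main obstacle.} The delicate step is the choice of $\bv$ together with the maxangle bookkeeping. First, one must argue carefully that a vector of $\bhP(\bX)_0$ with all coordinates nonzero exists; this is where ``not parallel to any coordinate axis'' is used, and I would check that the hypothesis is strong enough (it should mean $\bhP(\bX)_0$ lies in no coordinate hyperplane). Second, one must confirm that orthogonality in one direction suffices for maxangle $\pi/2$ under the paper's definition, which the remark ``(and vice versa)'' suggests.
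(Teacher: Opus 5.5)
Your final argument (Steps 2--3) is correct and is essentially the paper's proof. Your unit vector $\bv\in\bhP(\bX)_0$ plays the role of the paper's $\be_1$, and your hyperplane $\bx_1+\bv^{\perp}$ is its $\bH$ spanned by $\be_2,\ldots,\be_{\di}$. You also spell out what the paper leaves implicit: the untouched row that gives $n-1$, the need for a normal with all coordinates nonzero, and the symmetry of the maxangle, which holds because $\bhP(\bX^m)$ and $\bhP(\bX)$ are both $k$-dimensional.

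On your main worry: a literal reading of ``not parallel to any coordinate axis'' does not guarantee such a $\bv$. For example, the line spanned by $(1,1,0)^{\btop}$ in $\mathbb{R}^3$ is parallel to no axis, yet it lies in a coordinate plane. The supremum in~\eqref{eq:bdvPCA} rescues the argument anyway. Take the normal to be $\bv+\delta\bone_{\di}$, which has all coordinates nonzero for small $\delta>0$. Then every unit vector of $\bhP(\bX^m)_0$ has inner product of absolute value at most $\delta\sqrt{\di}$ with $\bv$, so the maxangle tends to $\pi/2$ as $\delta\to 0$.
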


\begin{proof}
Take any vector $\bv$ in the principal 
subspace $\bhP(\bX)$ and consider 
$\bP_0 = \bhP(\bX) - \bv$ which contains the 
origin. Construct $k$ basis vectors 
$\be_1,\ldots,\be_k$ of $\bP_0$.
Since $k < d$ the orthogonal complement of 
$\bP_0$ is nonempty, and it has
a basis $\be_{k+1},\ldots,\be_d$\,. Now 
consider the hyperplane $\bH$ spanned by 
$\be_2,\ldots,\be_d$\,. It satisfies
$\mbox{maxangle}\big(\bH, \bhP(\bX)\big) = 
\pi/2$ because $\be_1$ is orthogonal
to all vectors in $\bH$. We are 
then again in the situation 
of Proposition~\ref{prop:loc}, where 
in each row of $\bX$ we can move one cell
to obtain a point on $\bH$. 
Condition~\eqref{eq:pcaEFP} then implies 
that $\bhP(\bX^m) \subset \bH$, so 
$\mbox{maxangle}\big(\bhP(\bX^m),\bhP(\bX)
\big) = \pi/2$.
\end{proof}

We now show that any orthogonally 
equivariant PCA method satisfies 
property~\eqref{eq:pcaEFP}. 

\begin{proof}
Without loss of generality we can consider 
subspaces through the origin (by shifting). 
Suppose $\bhP$ is an orthogonally equivariant 
estimator, i.e. 
$\bhP(\bX \bR^{\top}) = \bR\,\bhP(\bX)$
for any orthogonal matrix $\bR$ and any 
dataset $\bX$.

Due to the orthogonal equivariance of $\bhP$, 
we can assume $\bX$ lies in the subspace 
spanned by the first $k$ canonical basis 
vectors $\be_1, \ldots, \be_k$. If this is not 
the case, we can transform $\bX$ by orthogonal 
transformation so that $\bX$ lies in
$\sspan(\be_1, \ldots, \be_k)$. This means 
that the last $d-k$ columns of $\bX$  are all 
zeroes. Now consider the diagonal matrix 
$\bD = \begin{bmatrix}
\bI_k & 
\bzero_{k\times (d-k)}\\
\bzero_{(d-k)\times k} & 
-\bI_{(d-k)}\\
\end{bmatrix}$. This orthogonal matrix 
$\bD$ reflects the last $d-k$ columns of 
$\bX$. Since these are zero, we have 
$\bX \bD^{\top} = \bX$. Now, because 
$\bhP$ is orthogonally equivariant, we have 
$\bhP(\bX) = \bhP(\bX \bD^{\top}) = 
\bD \bhP(\bX)$. This implies that the last 
$n-k$ components of $\bhP(\bX)$ are zero, 
hence \mbox{$\bhP(\bX) \subseteq 
\sspan(\be_1,\ldots, \be_k)$}.
\end{proof}

Together, these results prove 
Proposition~\ref{prop:PCA}.

\end{document}